\begin{document}
\newtheorem{theorem}{Theorem}
\newtheorem{acknowledgement}[theorem]{Acknowledgement}
\newtheorem{axiom}[theorem]{Axiom}
\newtheorem{case}[theorem]{Case}
\newtheorem{claim}[theorem]{Claim}
\newtheorem{conclusion}[theorem]{Conclusion}
\newtheorem{condition}[theorem]{Condition}
\newtheorem{conjecture}[theorem]{Conjecture}
\newtheorem{criterion}[theorem]{Criterion}
\newtheorem{definition}{Definition}
\newtheorem{exercise}[theorem]{Exercise}
\newtheorem{lemma}{Lemma}
\newtheorem{corollary}{Corollary}
\newtheorem{notation}[theorem]{Notation}
\newtheorem{problem}[theorem]{Problem}
\newtheorem{proposition}{Proposition}
\newtheorem{scheme}{Scheme}   
\newtheorem{protocol}{Protocol}   
\newtheorem{solution}[theorem]{Solution}
\newtheorem{summary}[theorem]{Summary}
\newtheorem{assumption}{Assumption}
\newtheorem{example}{\bf Example}
\newtheorem{remark}{\bf Remark}

\def\qed{$\Box$}
\def\QED{\mbox{\phantom{m}}\nolinebreak\hfill$\,\Box$}
\def\proof{\noindent{\emph{Proof:} }}
\def\poof{\noindent{\emph{Sketch of Proof:} }}
\def
\endproof{\hspace*{\fill}~\qed
\par
\endtrivlist\unskip}
\def\endproof{\hspace*{\fill}~\qed\par\endtrivlist\vskip3pt}

\def\E{\mathsf{E}}
\def\eps{\varepsilon}
\def\phi{\varphi}
\def\Lsp{{\boldsymbol L}}
\def\Bsp{{\boldsymbol B}}
\def\lsp{{\boldsymbol\ell}}
\def\Ltsp{{\Lsp^2}}
\def\Lpsp{{\Lsp^p}}
\def\Linsp{{\Lsp^{\infty}}}
\def\LtR{{\Lsp^2(\Rst)}}
\def\ltZ{{\lsp^2(\Zst)}}
\def\ltsp{{\lsp^2}}
\def\ltZt{{\lsp^2(\Zst^{2})}}
\def\ninN{{n{\in}\Nst}}
\def\oh{{\frac{1}{2}}}
\def\grass{{\cal G}}
\def\ord{{\cal O}}
\def\dist{{d_G}}
\def\conj#1{{\overline#1}}
\def\ntoinf{{n \rightarrow \infty }}
\def\toinf{{\rightarrow \infty }}
\def\tozero{{\rightarrow 0 }}
\def\trace{{\operatorname{trace}}}
\def\ord{{\cal O}}
\def\UU{{\cal U}}
\def\rank{{\operatorname{rank}}}
\def\acos{{\operatorname{acos}}}

\def\SINR{\mathsf{SINR}}
\def\SNR{\mathsf{SNR}}
\def\SIR{\mathsf{SIR}}
\def\tSIR{\widetilde{\mathsf{SIR}}}
\def\Ei{\mathsf{Ei}}
\def\l{\left}
\def\r{\right}
\def\({\left(}
\def\){\right)}
\def\lb{\left\{}
\def\rb{\right\}}

\setcounter{page}{1}

\newcommand{\eref}[1]{(\ref{#1})}
\newcommand{\fig}[1]{Fig.\ \ref{#1}}

\def\bydef{:=}
\def\ba{{\mathbf{a}}}
\def\bb{{\mathbf{b}}}
\def\bc{{\mathbf{c}}}
\def\bd{{\mathbf{d}}}
\def\bee{{\mathbf{e}}}
\def\bff{{\mathbf{f}}}
\def\bg{{\mathbf{g}}}
\def\bh{{\mathbf{h}}}
\def\bi{{\mathbf{i}}}
\def\bj{{\mathbf{j}}}
\def\bk{{\mathbf{k}}}
\def\bl{{\mathbf{l}}}
\def\bm{{\mathbf{m}}}
\def\bn{{\mathbf{n}}}
\def\bo{{\mathbf{o}}}
\def\bp{{\mathbf{p}}}
\def\bq{{\mathbf{q}}}
\def\br{{\mathbf{r}}}
\def\bs{{\mathbf{s}}}
\def\bt{{\mathbf{t}}}
\def\bu{{\mathbf{u}}}
\def\bv{{\mathbf{v}}}
\def\bw{{\mathbf{w}}}
\def\bx{{\mathbf{x}}}
\def\by{{\mathbf{y}}}
\def\bz{{\mathbf{z}}}
\def\b0{{\mathbf{0}}}

\def\bA{{\mathbf{A}}}
\def\bB{{\mathbf{B}}}
\def\bC{{\mathbf{C}}}
\def\bD{{\mathbf{D}}}
\def\bE{{\mathbf{E}}}
\def\bF{{\mathbf{F}}}
\def\bG{{\mathbf{G}}}
\def\bH{{\mathbf{H}}}
\def\bI{{\mathbf{I}}}
\def\bJ{{\mathbf{J}}}
\def\bK{{\mathbf{K}}}
\def\bL{{\mathbf{L}}}
\def\bM{{\mathbf{M}}}
\def\bN{{\mathbf{N}}}
\def\bO{{\mathbf{O}}}
\def\bP{{\mathbf{P}}}
\def\bQ{{\mathbf{Q}}}
\def\bR{{\mathbf{R}}}
\def\bS{{\mathbf{S}}}
\def\bT{{\mathbf{T}}}
\def\bU{{\mathbf{U}}}
\def\bV{{\mathbf{V}}}
\def\bW{{\mathbf{W}}}
\def\bX{{\mathbf{X}}}
\def\bY{{\mathbf{Y}}}
\def\bZ{{\mathbf{Z}}}

\def\mA{{\mathbb{A}}}
\def\mB{{\mathbb{B}}}
\def\mC{{\mathbb{C}}}
\def\mD{{\mathbb{D}}}
\def\mE{{\mathbb{E}}}
\def\mF{{\mathbb{F}}}
\def\mG{{\mathbb{G}}}
\def\mH{{\mathbb{H}}}
\def\mI{{\mathbb{I}}}
\def\mJ{{\mathbb{J}}}
\def\mK{{\mathbb{K}}}
\def\mL{{\mathbb{L}}}
\def\mM{{\mathbb{M}}}
\def\mN{{\mathbb{N}}}
\def\mO{{\mathbb{O}}}
\def\mP{{\mathbb{P}}}
\def\mQ{{\mathbb{Q}}}
\def\mR{{\mathbb{R}}}
\def\mS{{\mathbb{S}}}
\def\mT{{\mathbb{T}}}
\def\mU{{\mathbb{U}}}
\def\mV{{\mathbb{V}}}
\def\mW{{\mathbb{W}}}
\def\mX{{\mathbb{X}}}
\def\mY{{\mathbb{Y}}}
\def\mZ{{\mathbb{Z}}}

\def\cA{\mathcal{A}}
\def\cB{\mathcal{B}}
\def\cC{\mathcal{C}}
\def\cD{\mathcal{D}}
\def\cE{\mathcal{E}}
\def\cF{\mathcal{F}}
\def\cG{\mathcal{G}}
\def\cH{\mathcal{H}}
\def\cI{\mathcal{I}}
\def\cJ{\mathcal{J}}
\def\cK{\mathcal{K}}
\def\cL{\mathcal{L}}
\def\cM{\mathcal{M}}
\def\cN{\mathcal{N}}
\def\cO{\mathcal{O}}
\def\cP{\mathcal{P}}
\def\cQ{\mathcal{Q}}
\def\cR{\mathcal{R}}
\def\cS{\mathcal{S}}
\def\cT{\mathcal{T}}
\def\cU{\mathcal{U}}
\def\cV{\mathcal{V}}
\def\cW{\mathcal{W}}
\def\cX{\mathcal{X}}
\def\cY{\mathcal{Y}}
\def\cZ{\mathcal{Z}}
\def\cd{\mathcal{d}}
\def\Mt{M_{t}}
\def\Mr{M_{r}}
\def\O{\Omega_{M_{t}}}
\newcommand{\figref}[1]{{Fig.}~\ref{#1}}
\newcommand{\tabref}[1]{{Table}~\ref{#1}}

\newcommand{\var}{\mathsf{var}}
\newcommand{\fb}{\tx{fb}}
\newcommand{\nf}{\tx{nf}}
\newcommand{\BC}{\tx{(bc)}}
\newcommand{\MAC}{\tx{(mac)}}
\newcommand{\Pout}{p_{\mathsf{out}}}
\newcommand{\nnn}{\nn\\}
\newcommand{\FB}{\tx{FB}}
\newcommand{\TX}{\tx{TX}}
\newcommand{\RX}{\tx{RX}}
\renewcommand{\mod}{\tx{mod}}
\newcommand{\m}[1]{\mathbf{#1}}
\newcommand{\td}[1]{\tilde{#1}}
\newcommand{\sbf}[1]{\scriptsize{\textbf{#1}}}
\newcommand{\stxt}[1]{\scriptsize{\textrm{#1}}}
\newcommand{\suml}[2]{\sum\limits_{#1}^{#2}}
\newcommand{\sumlk}{\sum\limits_{k=0}^{K-1}}
\newcommand{\eqhsp}{\hspace{10 pt}}
\newcommand{\tx}[1]{\texttt{#1}}
\newcommand{\Hz}{\ \tx{Hz}}
\newcommand{\sinc}{\tx{sinc}}
\newcommand{\tr}{\mathrm{tr}}
\newcommand{\diag}{\mathrm{diag}}
\newcommand{\MAI}{\tx{MAI}}
\newcommand{\ISI}{\tx{ISI}}
\newcommand{\IBI}{\tx{IBI}}
\newcommand{\CN}{\tx{CN}}
\newcommand{\CP}{\tx{CP}}
\newcommand{\ZP}{\tx{ZP}}
\newcommand{\ZF}{\tx{ZF}}
\newcommand{\SP}{\tx{SP}}
\newcommand{\MMSE}{\tx{MMSE}}
\newcommand{\MINF}{\tx{MINF}}
\newcommand{\RC}{\tx{MP}}
\newcommand{\MBER}{\tx{MBER}}
\newcommand{\MSNR}{\tx{MSNR}}
\newcommand{\MCAP}{\tx{MCAP}}
\newcommand{\vol}{\tx{vol}}
\newcommand{\ah}{\hat{g}}
\newcommand{\tg}{\tilde{g}}
\newcommand{\teta}{\tilde{\eta}}
\newcommand{\heta}{\hat{\eta}}
\newcommand{\uh}{\m{\hat{s}}}
\newcommand{\eh}{\m{\hat{\eta}}}
\newcommand{\hv}{\m{h}}
\newcommand{\hh}{\m{\hat{h}}}
\newcommand{\Po}{P_{\mathrm{out}}}
\newcommand{\Poh}{\hat{P}_{\mathrm{out}}}
\newcommand{\Ph}{\hat{\gamma}}
\newcommand{\mat}[1]{\begin{matrix}#1\end{matrix}}
\newcommand{\ud}{^{\dagger}}
\newcommand{\C}{\mathcal{C}}
\newcommand{\nn}{\nonumber}
\newcommand{\nInf}{U\rightarrow \infty}


\title{\huge Wireless Data Acquisition for Edge Learning: \\
Data-Importance Aware Retransmission }

\author{Dongzhu Liu, Guangxu Zhu, Jun Zhang, and Kaibin Huang
\thanks{\noindent  D. Liu, G. Zhu, and K. Huang are with the Dept. of Electrical and Electronic Engineering at The University of Hong Kong, Hong Kong. J. Zhang is with the Dept. of Electronic and Information Engineering at the Hong Kong Polytechnic University, Hong Kong. Corresponding author: K. Huang (email: huangkb@eee.hku.hk). 
}}
\maketitle

%
%
%

\vspace{-14mm}

\begin{abstract}
By deploying machine-learning algorithms at the network edge, edge learning can leverage the enormous real-time data generated by billions of mobile devices to train AI models, which enable intelligent mobile applications. In this emerging research area, one key direction is to efficiently utilize radio resources for wireless data acquisition to minimize the latency of executing a learning task at an edge server. Along this direction, we consider the specific problem of retransmission decision in each communication round to ensure both reliability and quantity of those training data  for accelerating model convergence. To solve the problem, a new retransmission protocol called \emph{data-importance aware automatic-repeat-request} (importance ARQ) is proposed. Unlike the classic ARQ focusing merely on reliability, importance ARQ selectively retransmits a data sample based on its \emph{uncertainty} which helps learning and can be measured using the model under training. Underpinning the proposed protocol is a derived elegant communication-learning relation between two corresponding metrics, i.e., \emph{signal-to-noise ratio} (SNR) and data uncertainty. This relation facilitates the design of a simple threshold based policy for importance ARQ. The policy is first derived based on the classic classifier model of \emph{support vector machine} (SVM), where the uncertainty of a data sample is measured by its distance to the decision boundary. The policy is then extended to the more complex model of \emph{convolutional neural networks} (CNN) where data uncertainty is measured by entropy. Extensive experiments have been conducted for both the SVM and CNN using real datasets with balanced and imbalanced distributions. Experimental results demonstrate that importance ARQ effectively copes with channel fading and noise in wireless data acquisition to achieve faster model convergence than the conventional channel-aware ARQ. The  gain is more significant when the dataset is imbalanced. 
\end{abstract}

\vspace{-3mm}
\section{Introduction} 
\vspace{-1mm}
With the prevalence of smartphones and \emph{Internet-of-Things} (IoT) sensors on the network edge, known as \emph{edge devices}, people envision an incoming new world of ubiquitous computing and ambient intelligence. This vision motivates Internet companies and telecommunication operators to develop technologies for deploying machine learning  on the (network) edge to support intelligent mobile applications, named as \emph{edge learning}  \cite{zhu2018towards,wang2018edge,mao2017survey,park2018wireless}. This trend aims at leveraging enormous real-time data generated by billions of edge devices to train AI models. In return, downloading the learnt intelligence onto the devices will enable them to respond to real-time events with human-like capabilities. Edge learning crosses two disciplines, wireless communication and machine learning, which cannot be decoupled as their performances are interwound under a common goal of fast learning. 

As data-processing speeds are increasing rapidly, wireless acquisition of high-dimensional training data from many edge devices has emerged to be a bottleneck for fast edge learning, which faces the challenges due to high mobility and unreliable devices (see e.g., \cite{bonawitz2019towards}). This calls for designing highly efficient techniques for radio resource management targeting edge learning. For conventional techniques, data bits (or symbols) are assumed of equal importance, which simplifies the design criterion to be rate maximization but fails to exploit the features of learning. In contrast, for learning, \emph{the importance distribution in a  training dataset is non-uniform}, namely that some samples are more important than others. For instance, for training a classifier, the samples near decision boundaries are more critical than those far away \cite{settles2012active}. This fact motivates the proposed design principle of \emph{importance-aware resource allocation}.  In this work, we apply this principle to redesign the classic technique of  \emph{automatic repeat-request} (ARQ) for efficient wireless data acquisition in edge learning. 

\vspace{-10pt}

\subsection{ Wireless Communications for Edge Learning}

Conventional communication techniques are designed mostly for either reliable transmission or data-rate maximization without awareness of data utility for learning. Such a ``communication-learning separation" principle does not yield efficient solutions for acquiring large-scale distributed data in edge learning. Its increasingly critical communication bottleneck calls for redesigning communication techniques with a new objective of low-latency execution of learning tasks. 
Research opportunities in this largely uncharted area can be roughly grouped under three topics: radio resource allocation, multiple access, and signal encoding. The new idea in radio resource allocation for edge learning, the topic of our interest, is to consider data usefulness for learning in allocating resources for data uploading from devices to a server \cite{chen2018label}. In this paper, we consider retransmission which is a simple time-allocation method for ensuring reliable communication in the presence of channel hostility  \cite{she2017radio}. The widely used ARQ protocols repeat the transmission of a data packet until it is reliably received. Thereby, channel uses are allocated to packets under a reliability constraint. While existing ARQ designs purely target data reliability \cite{onggosanusi2003hybrid,zhang1999hybrid}, accelerating edge learning calls for new protocols incorporating the new feature of considering data importance in retransmission decision. This motivates our work.

The second key topic in the area is low-latency multi-access for distributed edge learning. Recent research focuses on federated learning, where edge devices transmit their local model updates to collaboratively update the global AI model by aggregation at the server \cite{konevcny2016federated}. One idea proposed recently is to perform ``over-the-air" aggregation by exploiting the waveform superposition property of a multi-access channel \cite{zhu2018low,amiri2019machine,yang2018federated}. Such a scheme allows simultaneous access and hence can dramatically reduce multi-access latency. 

Last, signal encoding for communication efficient edge learning represents another research thrust. Relevant research aims at integrating feature extraction, source coding, and channel encoding to compress transmitted data without significantly compromising learning performance. Examples include analog encoding on Grassmannian for high mobility data classification \cite{du2018fast} and quantized stochastic gradient descent \cite{alistarh2017qsgd}.

\vspace{-10pt}

{\color{black}
\subsection{Wireless Data Acquisition}

Efficient data acquisition is a classic topic in designing \emph{wireless sensor network} (WSN) with a rich literature \cite{li2018wireless,guo2017massive,malak2016optimizing,zhan2018energy,nie2018quality}. The main challenge is how to overcome the energy constraints of sensors to allow fusion centers to collect distributed sensing data without interruptions.  There exist diversified solutions such as wireless power transfer \cite{li2018wireless}, multi-hop transmission \cite{guo2017massive,malak2016optimizing}, and UAV-assisted data collection \cite{zhan2018energy}. One approach that shares the same spirit as the current work is to schedule sensors based on their data quality evaluated using criteria including cost, sensing accuracy and timeliness \cite{nie2018quality}. On the other hand, the ARQ protocol proposed in the current work also involves data evaluation which, however, is based on a different criterion, namely importance for learning. Overall, data utilization (i.e., computing or learning) is considered out of scope in prior work and not accounted for in existing techniques for data acquisition, leaving some space for performance improvement.

In machine learning, one topic relevant to data acquisition is \emph{active learning} \cite{settles2012active}. Consider the scenario where unlabeled data are abundant but manually labeling is expensive.  Active learning aims to selectively label informative data  (by querying an oracle), such that a model can be trained using as few labelled data samples as possible, thus reducing the labelling cost. Roughly speaking, the informativeness of a sample is related to how uncertain the prediction of this sample is under the current model.  To be specific, the more uncertain the prediction is, the more useful the sample can be for model learning.  Several commonly used uncertainty measures are \emph{entropy} \cite{holub2008entropy}, \emph{expected model change}~\cite{settles2008multiple}, and \emph{expected error reduction}~\cite{roy2001toward}. In active learning, wireless communication is irrelevant.  However, the uncertainty measures developed therein are useful for this work and integrated with a retransmission protocol to enable intelligent data acquisition in an edge learning system. 
}

\vspace{-10 pt}

\subsection{Contributions and Organization} 

This work concerns wireless data acquisition in edge learning.   In this work, we propose a new retransmission protocol called \emph{data-importance aware ARQ}, or \emph{importance ARQ} for short, which adapts retransmission decisions to both data importance and reliability (or equivalently the channel state). As a result, the allocation of channel uses is biased towards protecting important data samples against channel noise while ensuring the quantity of acquired data. Balancing the two aspects in the design results in the combined effects of accelerating model convergence and reducing the required budget of channel uses. 
To the authors' best knowledge, this work represents the first attempt on exploiting the non-uniform distribution of data informativeness to improve the communication efficiency of an edge learning system. 

The main contributions of this work are summarized as follows. 

\begin{itemize}
\item{\bf Importance ARQ for SVM:} First, consider the classic classifier model of \emph{support vector machine} (SVM).  The importance ARQ is designed to improve the quality-vs-quantity tradeoff.  The protocol selectively retransmits a data sample based on its underlying importance for training an SVM model which is estimated using the real-time model under training. For SVM, a suitable importance measure is proposed to be the shortest distance from a data sample to decision boundaries. 
 The theoretical contribution of the design lies in a derived \emph{elegant communication-learning  relation} between two corresponding metrics, i.e., \emph{signal-to-noise ratio} (SNR) and  \emph{data importance}, for targeted learning performance. This new relation facilitates the design of a simple threshold based policy for making retransmission decisions, where the SNR threshold is shown to be proportional to the importance measure.

\item{\bf  Extension to general classifiers:} The derived importance-ARQ policy for  SVM models is extended to general classifier models. Particularly, the SNR threshold is designed to be proportional to a monotonically increasing \emph{reshaping function} of a general importance measure.  The design captures the heuristic that more important data should be better protected against noise by a higher target SNR. Moreover, general guidelines on how to select the reshaping function and the SNR-importance scaling factor are discussed.  Subsequently, a case study on designing importance ARQ for the modern \emph{convolutional neural networks} (CNN) classifier is presented. 

\item{\bf  Experiments:}  We evaluate the performance of the proposed importance ARQ via extensive experiments using real datasets with balanced and imbalanced distribution. The results demonstrate that the proposed method avoids learning performance degradation caused by channel fading and noise while achieving faster convergence than the conventional channel-aware ARQ. Furthermore, the performance gain is found to be more significant for the imbalanced data distribution.  
\end{itemize}

The remainder of the paper is organized as follows. Section~\ref{sec: system model} introduces the communication and learning models. Section~\ref{sec: motivations} presents some initial experimental results and motivates the design of an intelligent retransmission protocol. The principle of importance ARQ is proposed for SVM in Section~\ref{sec: Principle}. It is extended to general classifiers in Section~\ref{sec: extension}. Section~\ref{sec:simulation} provides experimental results, followed by concluding remarks in Section~\ref{sec: concluding remarks}.

%
%
%

\section{Communication and Learning Models }\label{sec: system model}

In this section, we first introduce the communication system model and learning models.  Then data uncertainty metrics are defined for different learning models.  

\vspace{-10pt}
\subsection{Communication System Model}
We consider an edge learning system as  shown in Fig.~\ref{Fig: system} comprising an edge server and multiple edge devices, each equipped with a single antenna. A machine learning classifier is trained at the server using a labelled dataset distributed over devices. Denote the $k$-th data sample $(\bx_k, c_k)$ with $\bx_k \in \mathbb{R}^p$,  $p$ its dimensions, and $c_k\in \{1, 2, \cdots, C\}$ its label. The devices time share the channel and take turn to transmit local data to the server.  The time sharing is coordinated by a channel-aware scheduler while importance-aware scheduling is noted to be an interesting direction for future investigation. 
Note that a label has a much smaller size than a data sample (e.g., a $0-9$ integer versus a vector of a million coefficients). Thus two separate channels are planned: a low-rate   \emph{label channel} and a high-rate  \emph{data channel}. The former is assumed to be noiseless for simplicity. Reliable uploading of data samples over the noisy and fading channel is  the bottleneck of  wireless data acquisition and  the focus of  this work.  Time is slotted into symbol durations, called \emph{slots}. Transmission of a data sample requires $p$ slots, called a \emph{symbol block}. 

Upon receiving a data sample, the edge server makes a  \emph{binary decision} on whether to request a retransmission to improve the sample quality or a new sample from the scheduled device.   The decision is communicated to the device by transmitting either a 
 \emph{positive  ACK} or a \emph{negative ACK}.  The device is assumed to have backlogged data. Upon receiving a request from the server, the device transmits 
 either the previous sample or a new sample randomly picked from its buffer.

The data channel is assumed to  follow block-fading, where the channel coefficient remains static within a symbol block  and is \emph{independent and identically distributed} (i.i.d.) over different  blocks.  The transmit data sample ${ \bf x}=\l[X_1,X_2,\cdots, X_p\r]^{\sf T}$ is a random vector.  During the $i$-th symbol block, the active device sends the data ${ \bf x}^{(i)}$ using linear analog modulation, yielding the received signal given by
\begin{equation}\label{channel_model}
{\bf y}^{(i)}=\sqrt{P} h^{(i)}{\bf x}^{(i)}+{\bf z}^{(i)},
\end{equation} 
where $P$ is the transmit  power,  the channel coefficient $h^{(i)}$ is a complex \emph{random variable} (r.v.) with a unit variance, and ${\bf z}^{(i)}$ 
is the \emph{additive white Gaussian noise} (AWGN) vector with the entries  following i.i.d. ${\cal CN}(0,\sigma^2)$ distributions. Analog uncoded transmission is assumed not only for tractability but also to allow fast data transmission \cite{marzetta2006fast} and a higher energy efficiency (compared with the digital counterpart) \cite{cui2005energy}.   We assume that perfect \emph{channel state information} (CSI) on $h^{(i)}$   is available at the  server. This allows the server to compute  the  instantaneous SNR of a received data sample and make the retransmission decision. 
\begin{figure}[t]
\begin{center}
{\includegraphics[width=16.5cm]{./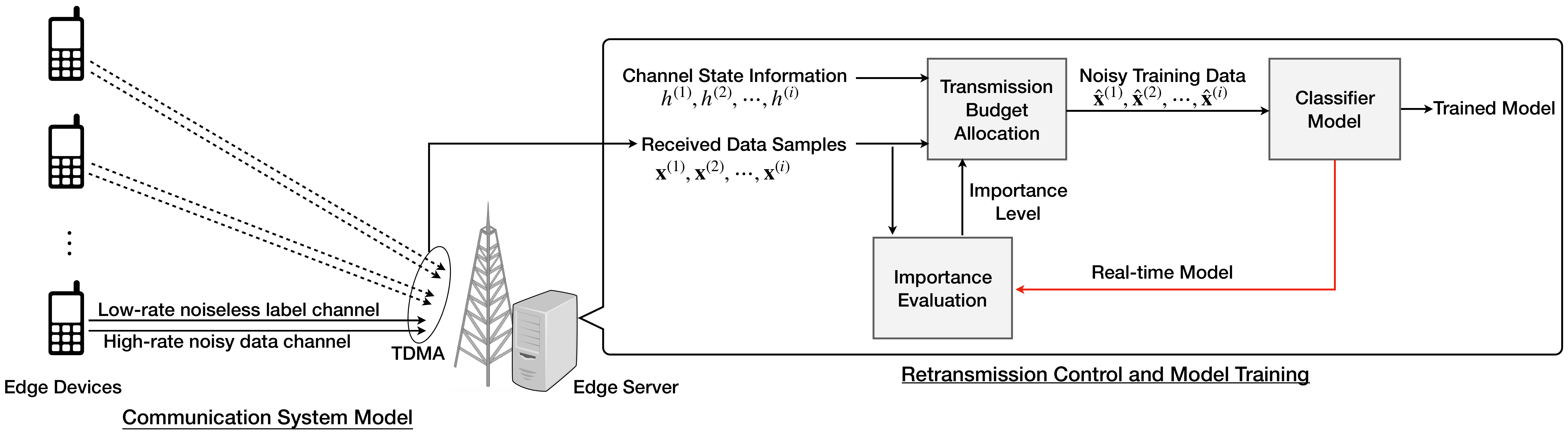}}\vspace{-6mm}
\caption{An edge learning system.}\vspace{-10mm}
\label{Fig: system}
\end{center}
\end{figure}

\subsubsection{Retransmission Combining} To exploit the time-diversity gain provided by multiple independent noisy observations of  the same data sample from retransmissions, the \emph{maximal-ratio combining} (MRC) technique is used to coherently combine all observations for maximizing  the receive SNR.  To be specific, consider  a data sample  $\bx$  retransmitted $T$ times.  All $T$ received copies, say from symbol block $n+1$ to $n+T$,  can be combined by MRC  to acquire the received sample, denoted as $\hat{\bx}(T)$, as follows:
\begin{equation} 
\hat{{ \bf x}}{(T)}=\frac{1}{\sqrt{P}}\Re\l(\sum_{i = n+1 }^{n+T}\frac{(h^{(i)})^*}{\sum_{m = n+1 }^{n+T}|h^{(m)}|^2}{\bf y}^{(i)}\r),\label{eq: est x}
\end{equation}
where ${\bf y}^{(i)}$ is given in \eqref{channel_model}. In \eqref{eq: est x}, we extract the real part of the combined signal for further processing since the data for learning  are real-valued in general (e.g., photos, voice clips or video clips).  As a result, the  effective receive SNR for $\hat{{ \bf x}}{(T)}$ after combining is given as 
\begin{equation}\label{eq: def SNR}
\SNR(T)=\frac{2P}{\sigma^2}\sum_{i = n+1 }^{n+T}|h^{(i)}|^2,
\end{equation}
where the coefficient $2$ at  the right-hand side arises from the fact that only the noise in the real dimension with variance  $\frac{\sigma^2}{2}$  affects the received data. The summation in \eqref{eq: est x} has a value growing as the number of retransmissions $T$ increases. The SNR expression in \eqref{eq: def SNR} measures the reliability of a received data sample  and  serves as a criterion for making the retransmission decision   as discussed  in Section~\ref{sec: Principle}. 

\subsubsection{Latency Constrained Transmission} Either due to the application-specific latency requirement  for the learning task or limited radio resources, the objective of designing the communication system is to minimize the duration of wireless data acquisition or equivalently maximize  the speed of model convergence. Under this objective,  the  retransmission protocol is designed in the sequel  to bias channel-use allocation towards providing better protection for more important data samples against channel noise.

\vspace{-8pt}

\subsection{Learning Models}
For the learning task, we consider supervised training of a classifier.  Prior to training, we assume that the edge server has a small set  of clean observed samples, denoted as ${\cal L}_0$.  This allows the construction of a coarse initial classifier, which is used for making  retransmission decisions at the beginning.  The classifier is refined  progressively in the data acquisition (and training) process.   
In this paper, we consider two widely used classifier models, i.e., the classic SVM classifier and the modern CNN classifier as introduced below.

\subsubsection{SVM Model} As shown in Fig.~\ref{Fig: SVM}, the SVM algorithm is to seek an optimal hyperplane ${\bf w} ^{\sf T}{\bf x}+b=0$ as a decision boundary by maximizing its margin $\gamma$ to data points, i.e., the minimum distance between the hyperplane to any data sample \cite{friedman2001elements}. 
%
 The points lie in the margin are referred to as \emph{support vectors} which directly determine the decision boundary. 
Let $({\bf x}_k, c_k)$ denote the $k$-th data-label pair in the training dataset. A convex optimization formulation for the SVM problem is given as
\begin{align}
\min_{{\bf w},b}\;\;  &\|{\bf w} \|^2 \label{eq:obj SVM}\\
{\rm s.t.} \;\;&c_k({\bf w} ^{\sf T}{\bf x}_k+b)\geq 1,  \quad   \forall k.
\end{align} 
The original SVM works only for linearly separable datasets, which is hardly the case when the dataset is corrupted by channel noise in the current scenario. To enable the algorithm to cope with a potential outlier caused by noise, a variant of SVM called \emph{soft margin SVM} is adopted. The technique is widely used in practice to classify a noisy dataset that is not linearly separable by allowing misclassification but with an additional penalty on the objective in \eqref{eq:obj SVM} (see \cite{friedman2001elements} for details).  After training, the learnt SVM model can be used for predicting the label of a new sample, denoted by ${\bf x}$, by computing its output score. The binary-classification case is as follows:
\begin{equation}\label{eq: def score}
\text{({\bf Output  Score}) }\quad s({\bf x})=({\bf w} ^{\sf T}{\bf x}+b)/\|\bw\|,
\end{equation}
where $\| \cdot\|$ represents the Euclidean norm and s({\bf x}) is a normalized score.  Then the sign of the output score yields the prediction of the binary label.

\begin{figure}[t]
\begin{center}
{\includegraphics[width=6cm]{./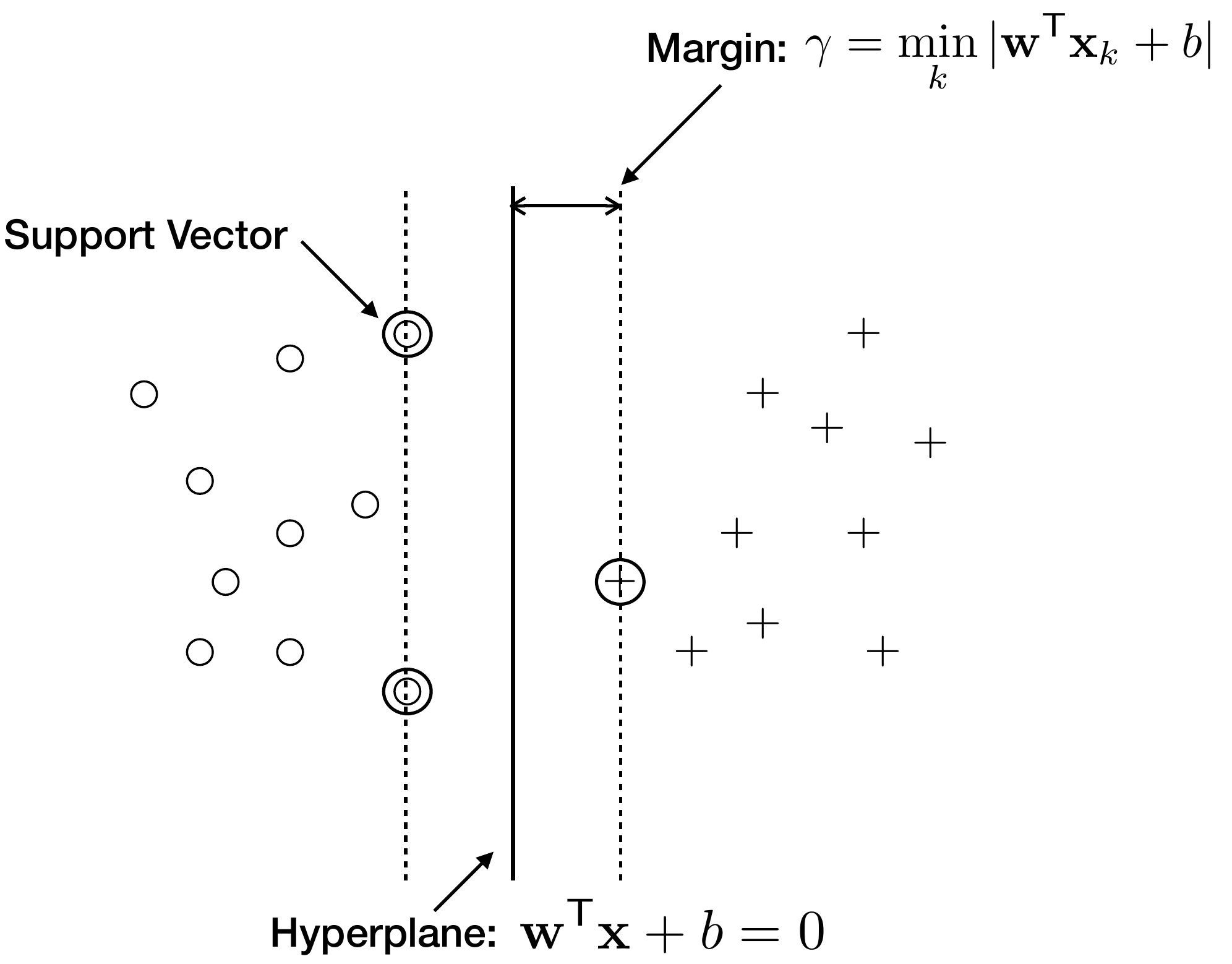}}\vspace{-6mm}
\caption{A binary SVM-classifier model.}\vspace{-6mm}
\label{Fig: SVM}
\vspace{-6mm}
\end{center}
\end{figure}

{\color{black}
\subsubsection{CNN model} CNN is made up of neurons that have adjustable weights and biases to express a non-linear mapping from an input data sample to class scores as outputs \cite{haykin1994neural}.  Fig.~\ref{Fig: CNN} illustrates the implementation of CNN, which consists of an input and an output layers, as well as multiple hidden layers. The hidden layers of a CNN typically include convolutional layers, ReLu layers, pooling layers, fully connected layers and normalization layers. 
Without the explicitly defined decision boundaries as for SVM, CNN adjusts the parameters of hidden layers to minimize the prediction error, calculated using the outputs of the softmax layer and the true labels of training data.  After training, the learnt CNN model can then be used for predicting the label of a new sample by choosing one with the highest posterior probability, which is obtained from the outputs of the softmax layer.  
}

\begin{figure}[t]
\begin{center}
{\includegraphics[width=15cm]{./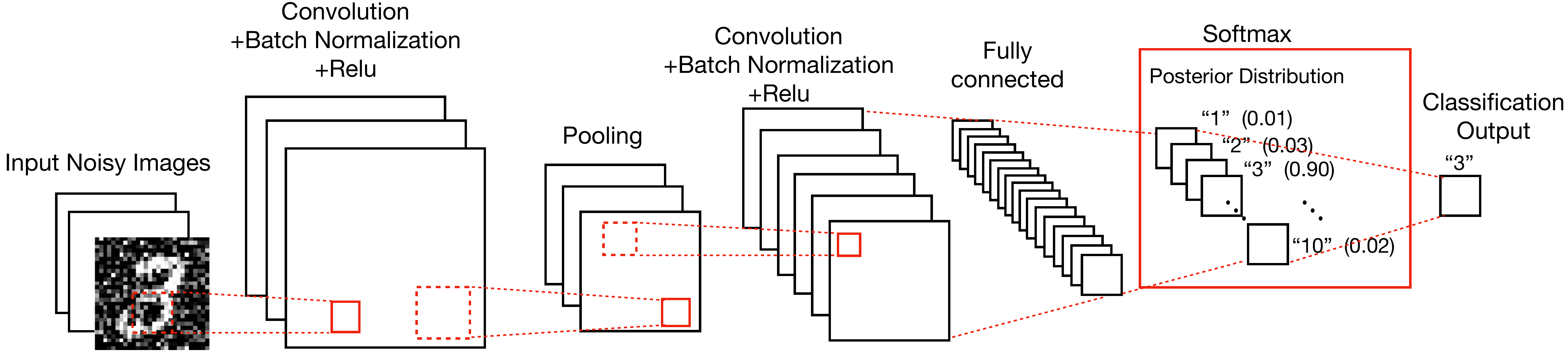}}\vspace{-15pt}
\caption{A CNN classifier model.}
\vspace{-10mm}
\label{Fig: CNN}
\end{center}
\end{figure}

\vspace{-10pt}
{\color{black}
\subsection{Data Uncertainty Metrics}\label{sec:Uncertainty}

The importance of a data sample for learning is usually measured by its \emph{uncertainty}, as viewed by the model under training \cite{settles2012active}.  Two uncertainty measures targeting SVM and CNN respectively are introduced as follows.



\subsubsection{Uncertainty Measure for SVM} For SVM, the uncertainty measure of a data sample is synonymous with its distance to the decision boundary \cite{tong2001support}. The definition is motivated by the fact that a classifier makes less confident inference on a  data sample which is located  near the decision boundary.  Based on this fact, we measure  the uncertainty of a data sample  by the inverse of its distance to the boundary.  Given a data sample ${\bf x}$ and a binary classifier, the said distance can be readily computed by the absolute value of the output score [see
\eqref{eq: def score}] as follows 
\begin{equation}\label{eq: def dist}
 d({\bf x}) =  | s({\bf x})|  = |{\bf w} ^{\sf T}{\bf x}+b|/ \|\bw\|.
\end{equation}
Then the distance based uncertainty measure is defined as
\begin{equation}\label{eq: uncertainty}
 \mathcal{U}_{\sf d}\l({\bf x}\r) = \frac{1}{ d^2({\bf x})} = {\|\bw\|^2}/{|{\bf w} ^{\sf T}{\bf x}+b|^2}.
\end{equation}
One can observe that the measure diverges as a data sample approaches the  decision boundary, and it reduces as the sample moves away from the boundary.

\subsubsection{Uncertainty Measure for CNN}  For CNN, a suitable measure is \emph{entropy}, an information theoretic notion, defined as follows \cite{holub2008entropy}:
%
\begin{equation}\label{eq: entropy}
 \mathcal{U}_{\sf e}\l({\bf x}\r)=-\sum_{c}P_{\theta}\l( c | \mathbf{ x}\r)\log P_{\theta}\l( c | \mathbf{ x}\r),
\end{equation}  
where $c$ denotes a class label and $\theta$ the set of model parameters to be learnt. To be precise, the model parameters are the weights and biases of the neurons in CNN.  

}

\section{Wireless Data Acquisition by Retransmission}\label{sec: motivations}

\subsection{Why Retransmission is Needed?}\label{sec: retransmission needed}

\begin{figure}[t]
\centering
\subfigure[Noiseless data.]{
\label{Fig: Mismatch_noiseless}
\includegraphics[width=5cm]{./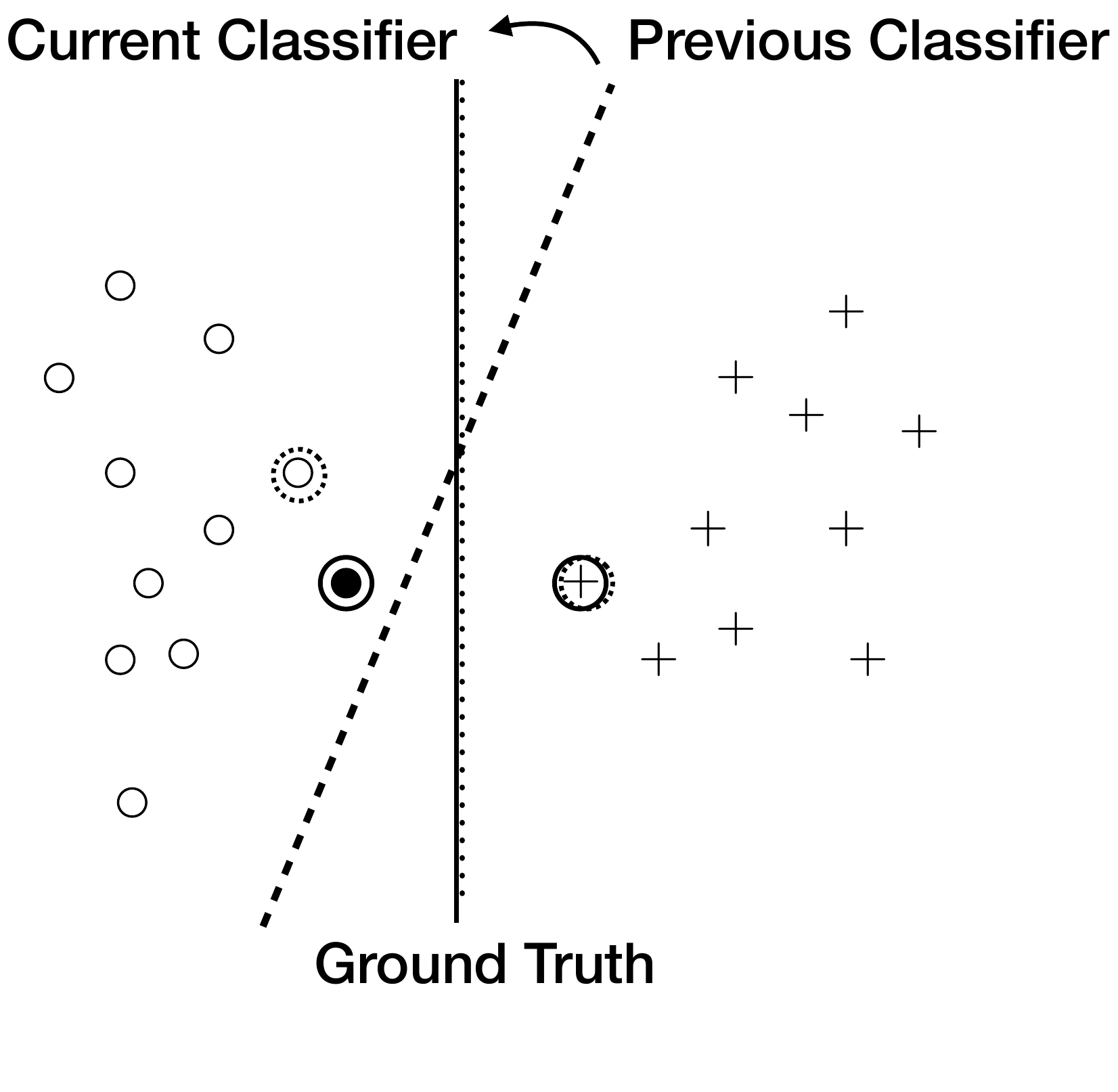}} 
\subfigure[Noisy data.]{
\label{Fig: Mismatch_noisy}
\includegraphics[width=6.7cm]{./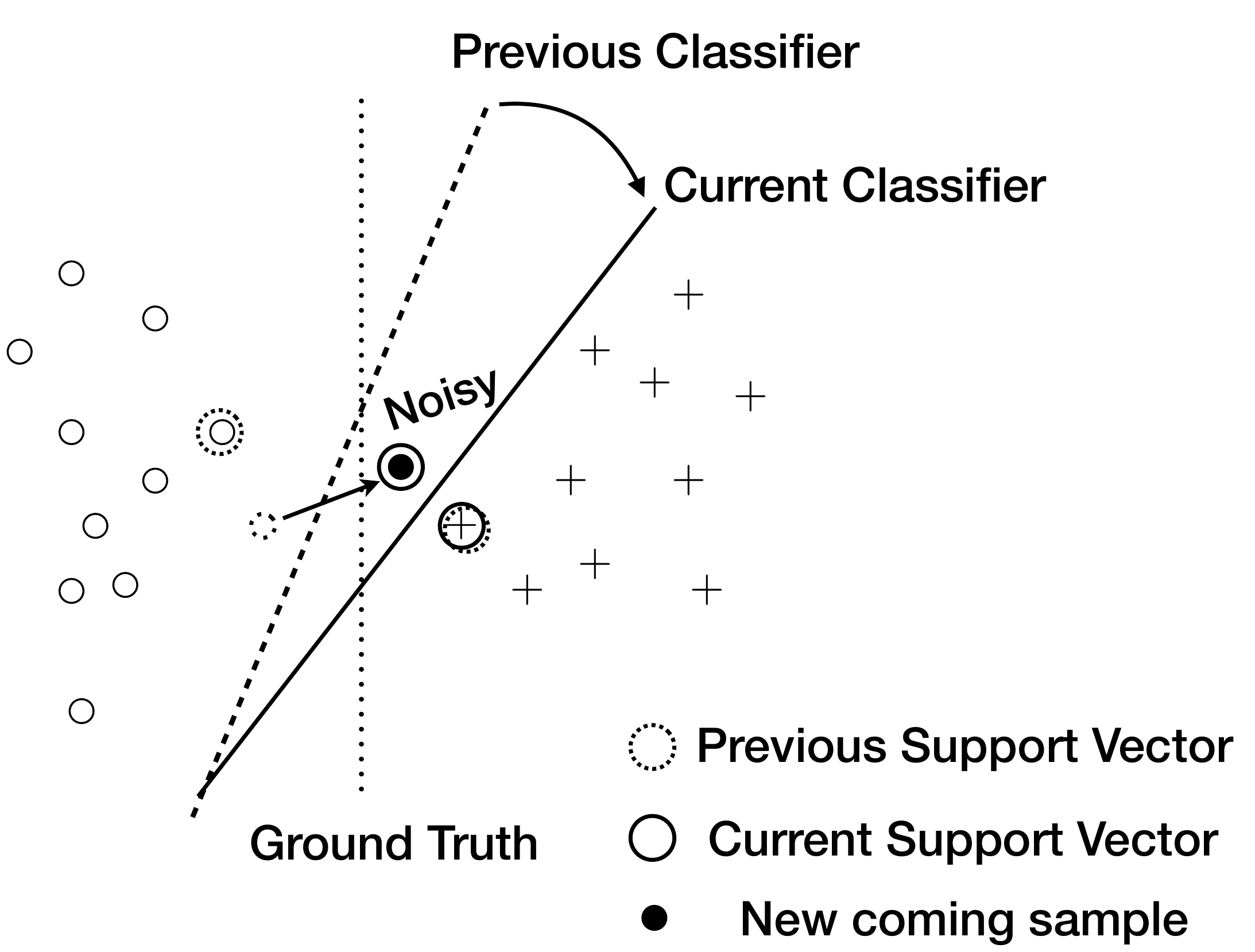}}
\caption{Illustration of the data-label mismatch issue for SVM.}
\vspace{-6mm}
\label{Fig: data label mismatch}
\end{figure}

Given a noisy data channel and a reliable label channel, the classifier model at the edge server is trained  using noisy data samples with  correct labels.  The channel noise and fading can cause a data sample to cross  the ground-truth decision boundary, thereby resulting a mismatch between the  sample and its  label, referred  to as a \emph{data-label mismatch}. The issue can cause incorrect learning as illustrated in Fig.~\ref{Fig: data label mismatch}. Specifically, for the noiseless transmission case in Fig.~\ref{Fig: Mismatch_noiseless}, the new data sample helps refine the current decision boundary to approach  the ground-truth one.  However, for the case of noisy transmission in Fig.~\ref{Fig: Mismatch_noisy}, noise  corrupts the  new sample and causes it to be an outlier falling into the opposite (wrong) side of the decision boundary.  The situation will be exacerbated when the SVM classifier is used since the outlier may be selected as the supporter vector (or indirectly affect the decision boundary by increasing the penalty in soft-margin SVM).

\begin{figure}[t]
\begin{center}
\subfigure[Learning performance for different number of retransmissions.]
{\includegraphics[width=10cm]{./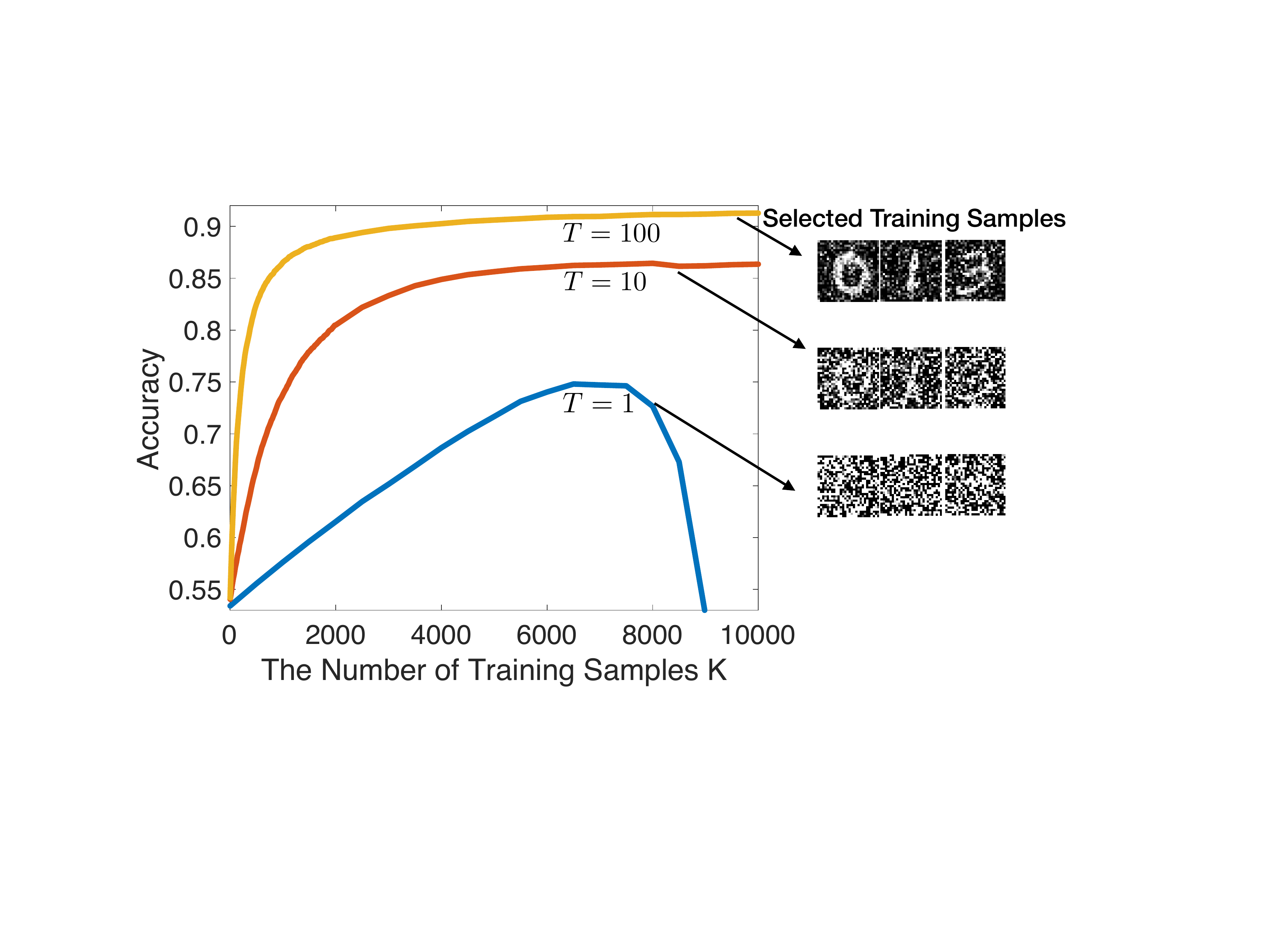}\label{fig: FixedRetrans}}\\
\subfigure[Visualization of received 10000 noisy training samples.]
{\includegraphics[width=12cm]{./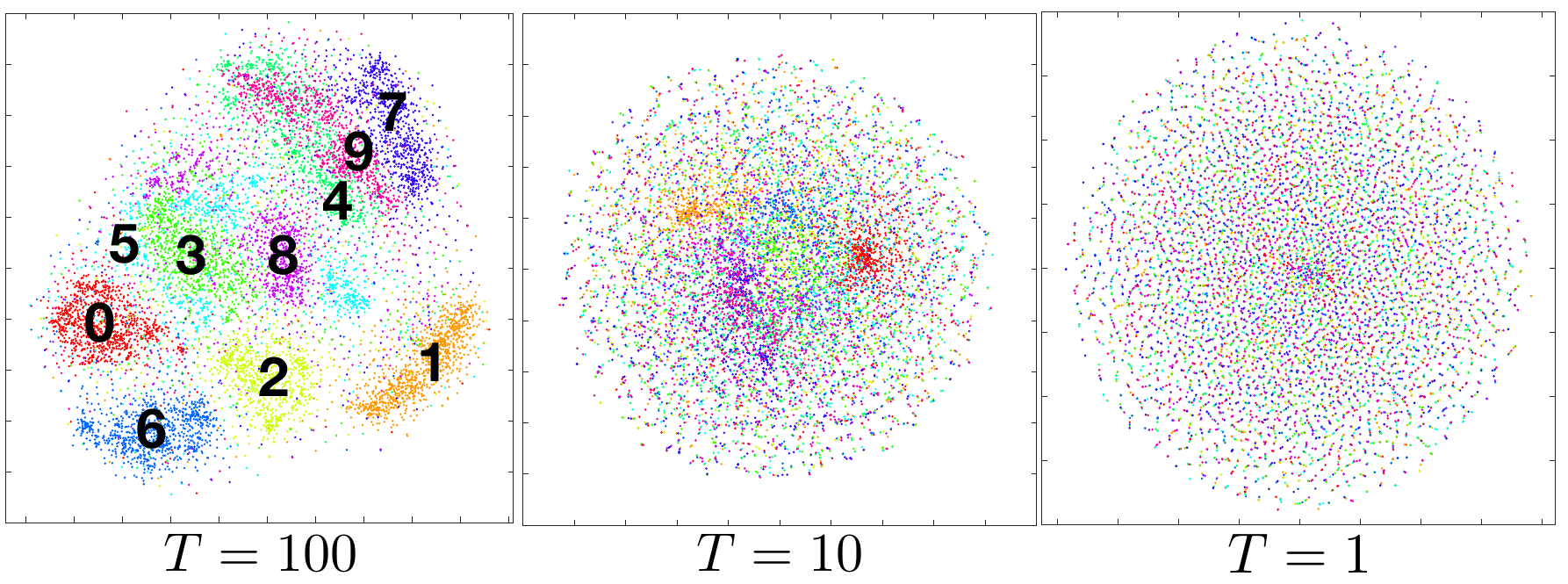}\label{fig: TSNE FixedRetrans}}
\caption{The impact of retransmission on the accuracy of the learnt model.}
\label{Fig: different SNR}
\vspace{-10mm}
\end{center}
\end{figure}

Retransmission can exploit time diversity to suppress channel noise and fading so as to improve data reliability and hence  the learning performance.  To visualize the benefit of retransmission, we compare in Fig. \ref{Fig: different SNR} the performance of classifiers which are trained using the noise corrupted  dataset with a varying number  of retransmissions. 
Specifically, we consider the learning  task of handwritten digit recognition using the well-known MNIST dataset that consists of 10 categories ranging from digit ``0'' to ``9'' \cite{lecun1998gradient}. The level of channel-noise is controlled by the average transmit SNR which is set as $\bar {\rho}=4 {\rm dB}$.  We train three SVM classifiers with different fixed numbers of retransmissions: $T=1, 10, 100$. The curves of their  test accuracy are shown  in  Fig.~\ref{fig: FixedRetrans}, with the corresponding received dataset  distribution visualized in Fig. \ref{fig: TSNE FixedRetrans}  using the classic 
\emph{$t$-distributed stochastic neighbor embedding} ($t$-SNE) algorithm for projecting the images onto the horizontal plane. It is observed from the case without retransmission ($T=1$), after receiving a certain number (i.e., 8000) of noisy data samples, 
the training of the classifier fails as reflected in the abrupt drop in test accuracy. The reason is that  the strong noise effect [see Fig. \ref{fig: FixedRetrans}]  accumulates to cause the divergence of the model [see Fig.~\ref{fig: TSNE FixedRetrans}]. As the number of retransmission increases, the noise effect is subdued to a sufficiently low level ensuring that the   class structure of the ideal dataset   can be resolved, leading to a converged model and a high  test accuracy. The experiment  demonstrates  the effectiveness of  retransmission in edge learning.  To further improve learning performance and more efficiently utilize the transmission budget, retransmission should be adapted to the importance levels of individual data samples, which is the focus of the remainder of the paper.  
%
\vspace{-12pt}

\subsection{Problem Statement}\label{sec: problem statement}

 The objective of designing importance ARQ is to adapt retransmission to both the data importance and the channel state so as to efficiently utilize  the finite transmission budget for optimizing the learning accuracy. The challenges faced by the design are reflected in two issues described as follows.  

\begin{itemize}
\item \emph{Quality-vs-Quantity Tradeoff}: The learning performance can be improved by either increasing the reliability (quality) of the wirelessly transmitted training dataset by more  retransmissions, or increasing its size (quantity) by acquiring more  data samples at the cost of their quality. Given a limited transmission budget, a tradeoff exists  between the two aspects, called the \emph{quality-vs-quantity tradeoff}. An efficient retransmission design must exploit the tradeoff to optimize the learning performance.  

\item \emph{Non-uniform Data Importance}: In conventional data communication, bits are implicitly assumed to have equal  importance. This is not the case for training a classifier where data samples with higher uncertainty are more informative and thus more important than those with lower uncertainty. Considering the non-uniform  importance in training data provides a new dimension for improving the communication efficiency, which should be also leveraged in the design.  
\end{itemize}


\vspace{-3mm}
\section{Data-Importance Aware Retransmission}\label{sec: Principle}
In this section, we consider the task of training an SVM classifier at the edge. First, the concept of noisy data alignment is introduced to relate wireless transmission and learning performance. By applying a relevant constraint, the importance ARQ protocol is derived  to intelligently allocate channel uses to the acquisition of individual data samples so as to accelerate model convergence. The protocol  is first designed for binary classification and then extended to multi-class classification.  

\vspace{-10pt}
\subsection{Probability of Noisy  Data Alignment}
The direct design of importance ARQ for optimizing the learning performance is difficult as there lacks a tractable mapping from data quality to learning accuracy. 
In this section, the difficulty is overcome by deriving a condition for retaining the usefulness of received data for learning in the presence of channel noise, which can differentiate data importance levels. The condition is derived based on the following fact: \emph{a  noisy received data sample can mislead the model training if its label  as predicted by the model differs from the ground truth received without noise}. To avoid this problem in the context of SVM,  a pair of  transmitted and received data samples should be  forced to lie at  the same side (ground-truth) of the decision hyperplane of the classifier model so that they have the same predicted labels. This event is  referred to as \emph{noisy data alignment} and denoted as ${\cal A}$.  Its probability  is called the \emph{data-alignment probability}.  From the distance based uncertainty defined in \eqref{eq: uncertainty} for SVM, one can see that data samples with higher uncertainty are more vulnerable  to noise corruption. To be specific,  a small noise perturbation  can push  a highly uncertain data sample  across  the decision boundary to result in the aforementioned data-label mismatch (see Fig. ~\ref{Fig: data label mismatch}). 
The high vulnerability of important data is the reason that importance ARQ allocates more resources to ensure their reliability, giving the protocol its name. The objective of designing importance ARQ is to satisfy  a constraint on the data-alignment probability. 

Next, the data-alignment probability is defined  mathematically for a binary classifier. Since the ground-truth model is unknown, the occurrence of the event ${\cal A}$ is evaluated using  the current  model under training  as a \emph{surrogate}. As a result, the  output scores  defined in \eqref{eq: def score} must yield the same signs for a pair of transmitted and received data samples if they are aligned.  Consider  an arbitrary transmitted data sample ${\bf x}$  and its received version $\hat{\bf x}(T)$  after $T$ retransmissions as defined in~\eqref{eq: est x}. The event ${\cal A}$ is specified as
\begin{equation}
\{{\cal A}\ |\ s({\bf x})s(\hat{\bf x}(T))>0\}.
\end{equation} 
Then  data alignment probability can be  mathematically  defined as follows.

\begin{definition}[Data-alignment probability]
\emph{Conditioned on the received data sample, the \emph{data-alignment probability} is defined as:
\begin{align}\label{conf_level}
{\cal P}\l(\hat{\bf x}(T)\r) = {\sf Pr} \l({\cal A} \; |\;  \hat{\bx}(T)\r).
\end{align} 
}
\end{definition}

The remainder of the sub-section is focused on analyzing the probability. To begin with, the distribution of the transmitted sample score $s(\bold{x})$ conditioned on the received data sample $\hat{\bold{x}}(T)$ can be obtained  from the conditional distribution of the transmitted sample, i.e., $p({\bf x} | \hat{\bf x}{(T)})$, as derived below.  
 
\begin{lemma}
\label{lemma: noiseless data distribution} \emph{Conditioned on the received  sample $\hat{\bf x}{(T)}$, the distribution of the transmitted sample $\bold{x}$ follows a Gaussian distribution:
\begin{equation}\label{eq:data distribution}
{\bf x} | \hat{\bf x}{(T)} \sim {\cal N}\l(\hat{\bf x}{(T)},\frac{1}{\SNR(T)} \bf I\r),
\end{equation}
where $\SNR(T)$ is the effective SNR given in \eqref{eq: def SNR}. 
 }
\end{lemma}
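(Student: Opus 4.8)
The plan is to make the map from $\bf x$ to $\hat{\bf x}(T)$ fully explicit by substituting the channel model \eqref{channel_model} into the MRC combiner \eqref{eq: est x}, and then to read off the posterior by Bayes' rule. First I would write ${\bf y}^{(i)}=\sqrt{P}h^{(i)}{\bf x}+{\bf z}^{(i)}$ (the same $\bf x$ is retransmitted in all $T$ copies) and plug it in. The signal term collapses cleanly: the weights $(h^{(i)})^*/\sum_m|h^{(m)}|^2$ acting on $\sqrt{P}h^{(i)}{\bf x}$ sum to $\sqrt{P}\big(\sum_i|h^{(i)}|^2\big){\bf x}/\sum_m|h^{(m)}|^2=\sqrt{P}\,{\bf x}$; dividing by $\sqrt{P}$ and taking the real part, and \emph{crucially} using that $\bf x$ is real-valued, the signal contribution is exactly $\bf x$ with no attenuation. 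Hence $\hat{\bf x}(T)={\bf x}+{\bf n}$, where $\bf n$ collects the combined, real-projected noise.

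Second, I would characterize $\bf n$ conditioned on the channel realizations $\{h^{(i)}\}$, which are known at the server under the perfect-CSI assumption. Writing each channel in rectangular form $h^{(i)}=a_i+\mathrm{j}b_i$ and each noise coordinate as $z_j^{(i)}=z^{(i)}_{j,R}+\mathrm{j}z^{(i)}_{j,I}$ with $z^{(i)}_{j,R},z^{(i)}_{j,I}\sim{\cal N}(0,\sigma^2/2)$ independently, the $j$-th entry of the effective noise is $n_j=(\sqrt{P}\,S)^{-1}\sum_i\big(a_i z^{(i)}_{j,R}+b_i z^{(i)}_{j,I}\big)$ with $S=\sum_m|h^{(m)}|^2$. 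This is a linear combination of independent Gaussians, hence zero-mean Gaussian, with variance $(\sigma^2/2)(P S^2)^{-1}\sum_i(a_i^2+b_i^2)=\sigma^2/(2PS)$, which by the definition \eqref{eq: def SNR} is exactly $1/\SNR(T)$. Independence of the AWGN across coordinates $j$ makes the covariance diagonal, so ${\bf n}\sim{\cal N}\big({\bf 0},\,\SNR(T)^{-1}{\bf I}\big)$. The bookkeeping of the factor $2$ (from retaining only the real dimension of the noise) is the one place demanding care, and it is precisely what produces the coefficient $2$ already appearing in \eqref{eq: def SNR}.

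Finally, I would recover the claimed posterior. Since $\bf n$ is symmetric Gaussian and independent of $\bf x$, the likelihood is $p\big(\hat{\bf x}(T)\mid{\bf x}\big)={\cal N}\big(\hat{\bf x}(T);{\bf x},\SNR(T)^{-1}{\bf I}\big)\propto\exp\big(-\tfrac{1}{2}\SNR(T)\|\hat{\bf x}(T)-{\bf x}\|^2\big)$. Under a non-informative (flat) prior on $\bf x$, Bayes' rule gives $p\big({\bf x}\mid\hat{\bf x}(T)\big)\propto p\big(\hat{\bf x}(T)\mid{\bf x}\big)$, and because the exponent is symmetric in $\bf x$ and $\hat{\bf x}(T)$, the posterior is exactly ${\cal N}\big(\hat{\bf x}(T),\SNR(T)^{-1}{\bf I}\big)$, establishing \eqref{eq:data distribution}.

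The main obstacle is conceptual rather than computational: justifying the passage from the additive model $\hat{\bf x}(T)={\bf x}+{\bf n}$ to the conditional law $p\big({\bf x}\mid\hat{\bf x}(T)\big)$. The clean identities mean $=\hat{\bf x}(T)$ and covariance $=\SNR(T)^{-1}{\bf I}$ hold only under a flat prior on $\bf x$ (equivalently, treating $\bf x$ as a deterministic unknown and identifying the conditional with the likelihood); I would state this assumption explicitly, since for a genuinely informative prior the conditional mean would shrink toward the prior mean and the covariance would be altered.
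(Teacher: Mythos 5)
Your proposal is correct, and its computational core---substituting the channel model \eqref{channel_model} into the MRC combiner \eqref{eq: est x}, collapsing the signal term exactly to ${\bf x}$, and showing that the combined, real-projected noise is $\mathcal{N}\left({\bf 0}, \SNR(T)^{-1}{\bf I}\right)$ given the CSI (including the factor-of-$2$ bookkeeping)---is the same calculation the paper performs in Appendix~A. The genuine difference lies in how the conditional law is extracted at the end. The paper rearranges $\hat{\bf x}(T) = {\bf x} + \Re\left(\widetilde{\bf z}(T)\right)$ into ${\bf x} = \hat{\bf x}(T) - \Re\left(\widetilde{\bf z}(T)\right)$ and reads off \eqref{eq:data distribution} directly, which tacitly treats the combined noise as independent of the received sample---equivalently, it treats ${\bf x}$ as a deterministic unknown. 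You instead pass through the likelihood and invoke Bayes' rule under a flat prior, using the symmetry of the Gaussian exponent in ${\bf x}$ and $\hat{\bf x}(T)$ to conclude. Both routes yield the same answer, but yours makes explicit the assumption the paper leaves unstated: identifying the conditional mean with $\hat{\bf x}(T)$ and the conditional covariance with $\SNR(T)^{-1}{\bf I}$ is valid only for a non-informative prior (for any informative prior on ${\bf x}$, the posterior mean would shrink toward the prior mean and the covariance would change). Flagging this is a real gain in rigor; the paper's version buys brevity at the cost of glossing over exactly that point.
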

\proof {See Appendix~\ref{app: noiseless data distribution}.}

With the result,  the useful  distribution $p(s(\bold{x}) | \hat{\bf x}{(T)})$ can be readily derived using the linear relationship  in \eqref{eq: def score}.  The derivation simply involves projecting the high-dimensional Gaussian distribution onto a particular direction specified by $\bold{w}$,  which yields a univariate Gaussian distribution of dimension one as elaborated below.

\vspace{-10pt}

\begin{lemma}
\label{prop: Distribution of Distance to the Hyperplane} \emph{Conditioned on the estimated sample $\hat{\bf x}{(T)}$, the distribution of the transmitted sample score $s({\bf x})$  follows a unit-variate Gaussian distribution, given by 
\begin{align}\label{eq:distance distribution}
s({\bf x})|\hat{\bf x}{(T)}\sim {\cal N}\l(s(\hat{\bf x}{(T)}), \frac{1}{\SNR(T)}\r).
\end{align}
}
\end{lemma}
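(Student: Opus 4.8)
The plan is to exploit Lemma~\ref{lemma: noiseless data distribution} together with the fact that the output score $s({\bf x}) = ({\bf w}^{\sf T}{\bf x}+b)/\|{\bf w}\|$ is, for fixed model parameters $({\bf w},b)$, an \emph{affine} function of the transmitted sample ${\bf x}$. Since an affine image of a Gaussian random vector is again Gaussian, conditioning on $\hat{\bf x}(T)$ immediately forces $s({\bf x})\,|\,\hat{\bf x}(T)$ to be univariate Gaussian, and it then remains only to identify its mean and variance. This is exactly the projection step anticipated in the text preceding the lemma: we collapse the isotropic $p$-dimensional conditional law of Lemma~\ref{lemma: noiseless data distribution} onto the one-dimensional subspace spanned by ${\bf w}$.

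For the mean, I would apply linearity of expectation to the conditional law ${\bf x}\,|\,\hat{\bf x}(T)\sim{\cal N}(\hat{\bf x}(T),\SNR(T)^{-1}{\bf I})$ supplied by Lemma~\ref{lemma: noiseless data distribution}, giving $\mathbb{E}[s({\bf x})\,|\,\hat{\bf x}(T)] = ({\bf w}^{\sf T}\hat{\bf x}(T)+b)/\|{\bf w}\|$, which is precisely $s(\hat{\bf x}(T))$ by the definition in \eqref{eq: def score}. For the variance, the additive constant $b/\|{\bf w}\|$ drops out, so I need only the variance of the linear form ${\bf w}^{\sf T}{\bf x}/\|{\bf w}\|$; using the conditional covariance $\SNR(T)^{-1}{\bf I}$ this equals $\|{\bf w}\|^{-2}\,{\bf w}^{\sf T}(\SNR(T)^{-1}{\bf I}){\bf w} = \SNR(T)^{-1}$.

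The one point worth flagging — the only place the argument is not purely mechanical — is the cancellation of $\|{\bf w}\|^2$ in the variance computation. It occurs because the conditional covariance is \emph{isotropic} (a scalar multiple of the identity), so the quadratic form ${\bf w}^{\sf T}{\bf w}=\|{\bf w}\|^2$ exactly absorbs the factor $\|{\bf w}\|^{-2}$ introduced by the score normalization. This is what makes the resulting variance equal to $\SNR(T)^{-1}$, independent of the model geometry and identical to the per-coordinate variance of Lemma~\ref{lemma: noiseless data distribution}; had the noise been anisotropic, or had $s(\cdot)$ not been normalized by $\|{\bf w}\|$, the variance would retain a residual dependence on ${\bf w}$. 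Beyond this bookkeeping I do not expect any genuine obstacle, so the proof should be short.
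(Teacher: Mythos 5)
Your proposal is correct and matches the paper's own (sketched) argument exactly: the paper likewise obtains the lemma by projecting the isotropic conditional Gaussian of Lemma~\ref{lemma: noiseless data distribution} onto the direction ${\bf w}$ via the affine map $s(\cdot)$, with the mean $s(\hat{\bf x}(T))$ and variance $1/\SNR(T)$ following from the same linearity and quadratic-form computation. Your remark on the cancellation of $\|{\bf w}\|^2$ due to isotropy is a correct and worthwhile clarification of why the variance is model-independent, but it does not constitute a different route.
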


\begin{figure}[t]
\begin{center}
\subfigure[Small uncertainty.]
{\includegraphics[width=7cm]{./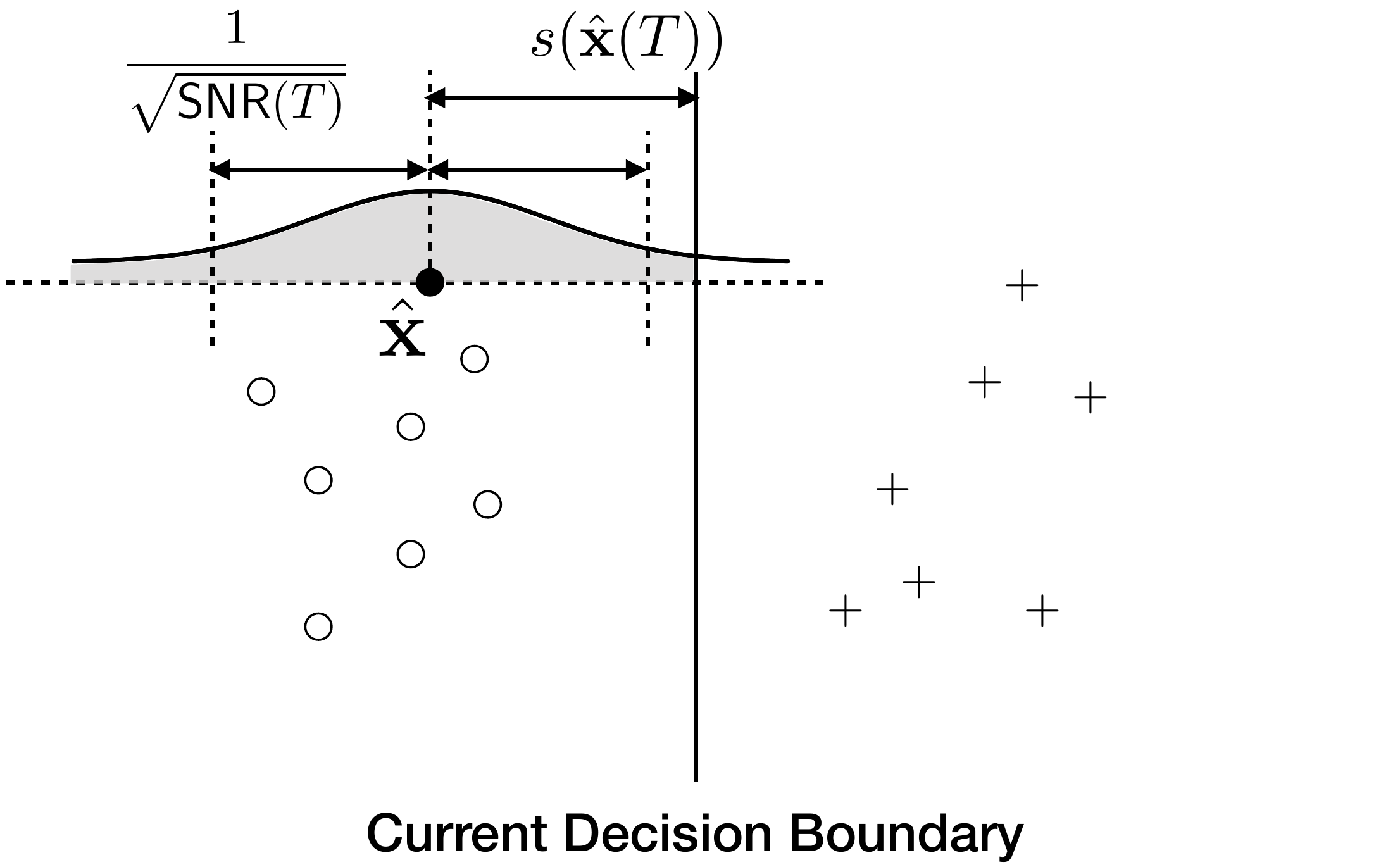}\label{Fig: confidence flat}} 
\subfigure[Lagre uncertainty.]
{\includegraphics[width=4.5cm]{./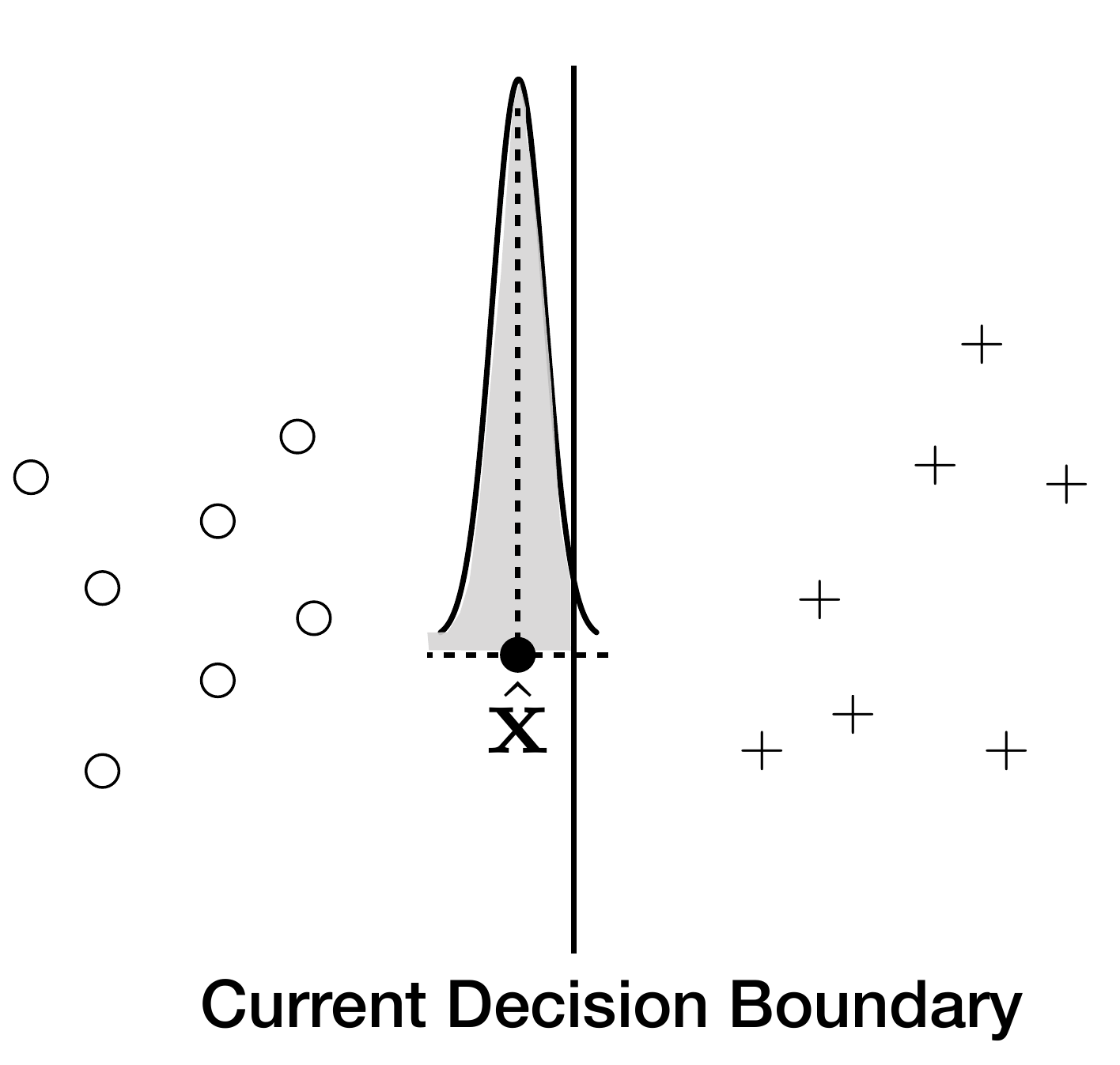}\label{Fig: confidence sharp}}
\caption{Illustration of the probability of noisy data alignment.}
\vspace{-10pt} 
\label{Fig: confidence level cal}
\vspace{-6mm}
\end{center}
\end{figure}

Based on Lemma \ref{prop: Distribution of Distance to the Hyperplane}, the data-alignment probability is presented in the following proposition.  
\vspace{-10pt}
\begin{proposition}
\label{prop: expr prob align} \emph{Consider the training of a  binary SVM classifier at the edge. Conditioned on the received sample $\hat{\bf x}{(T)}$, the data-alignment probability is given as 
\begin{align}\label{eq:derived align prob}
{\cal P}\l(\hat{\bf x}(T)\r) =\frac{1}{2}\l[1+{\rm erf }\l(\sqrt{\SNR(T)}\times\frac{|s(\hat{\bf x}(T))|}{ \sqrt{2}}\r)\r],
\end{align}
where $\mathrm{erf}(\cdot)$ is the well known  error function defined as $\mathrm{erf}(x)=\frac{2}{\sqrt{\pi}}\int_{0}^{x}e^{-t^2}dt$. }
\end{proposition}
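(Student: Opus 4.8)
The plan is to evaluate the conditional probability directly from the univariate Gaussian supplied by Lemma~\ref{prop: Distribution of Distance to the Hyperplane}. Conditioned on $\hat{\bx}(T)$, the transmitted-sample score $s(\bx)$ is Gaussian with mean $s(\hat{\bx}(T))$ and variance $1/\SNR(T)$, and $s(\hat{\bx}(T))$ itself is a fixed (non-random) quantity once we condition on the received sample. The event ${\cal A}$, namely $s(\bx)s(\hat{\bx}(T))>0$, therefore reduces to a one-sided tail probability of this normal. First I would unpack ${\cal A}$ into the two mutually exclusive cases $s(\hat{\bx}(T))>0$ and $s(\hat{\bx}(T))<0$: in the former, alignment requires $s(\bx)>0$, and in the latter it requires $s(\bx)<0$.

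Next I would standardize. Writing $\mu=s(\hat{\bx}(T))$ and $\sigma=1/\sqrt{\SNR(T)}$, the event $\{s(\bx)>0\}$ has probability $\Phi(\mu/\sigma)$ while $\{s(\bx)<0\}$ has probability $\Phi(-\mu/\sigma)$, with $\Phi$ the standard normal CDF. The key observation is that the two branches collapse to a single expression: when $\mu>0$ the first gives $\Phi(\mu/\sigma)=\Phi(|\mu|/\sigma)$, and when $\mu<0$ the second gives $\Phi(-\mu/\sigma)=\Phi(|\mu|/\sigma)$. Hence in all cases ${\cal P}(\hat{\bx}(T))=\Phi\l(|s(\hat{\bx}(T))|\,\sqrt{\SNR(T)}\r)$, and it is precisely this sign symmetry that produces the absolute value in the final formula.

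Finally I would translate the CDF into the error function through the identity $\Phi(x)=\frac{1}{2}\l[1+{\rm erf}\l(x/\sqrt{2}\r)\r]$, which immediately yields the claimed~\eqref{eq:derived align prob}. There is no genuine obstacle here; the only points requiring care are verifying that the two sign branches agree so that the absolute value is legitimate, and correctly carrying the factors $\sqrt{\SNR(T)}=1/\sigma$ and $\sqrt{2}$ through the conversion so that the argument of ${\rm erf}$ emerges as $\sqrt{\SNR(T)}\,|s(\hat{\bx}(T))|/\sqrt{2}$ rather than with a misplaced square root or a dropped factor of two.
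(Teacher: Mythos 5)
Your proof is correct and follows essentially the same route as the paper's: both invoke Lemma~\ref{prop: Distribution of Distance to the Hyperplane} and compute the one-sided Gaussian tail probability, which the paper writes directly as $0.5$ plus an integral over $[0,|s(\hat{\bf x}(T))|]$ and you write as $\Phi\l(|s(\hat{\bf x}(T))|\sqrt{\SNR(T)}\r)$ before converting to $\mathrm{erf}$. Your explicit treatment of the two sign branches is a minor (and welcome) elaboration of what the paper handles implicitly via its figure-based shaded-area argument.
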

\begin{proof}
As shown in Fig.~\ref{Fig: confidence level cal}, the conditional distribution for the transmitted data score $s({\bf x})$ is a Gaussian  and the probability of data alignment is equal to the area shaded in grey.   Mathematically, the probability can be derived using Lemma \ref{prop: Distribution of Distance to the Hyperplane} as follows:
\begin{align}
{\cal P}\l(\hat{\bf x}(T)\r) =&0.5+\sqrt{\frac{\SNR(T)}{2\pi }} \int_{0}^{|s(\hat{\bf x}(T))|} e^{-\SNR(T)\frac{ t^2}{2}}dt \label{eq: define sigma d}.
\end{align}
The integral therein can be further expressed using the error function $\mathrm{erf}(x)=\frac{2}{\sqrt{\pi}}\int_{0}^{x}e^{-t^2}dt$. 
\end{proof}


\begin{remark}\emph{(How does retransmission affects noisy data alignment?)\label{remark: retrans} \  Retransmission contributes to increasing  the data-alignment probability.  Specifically,  retransmission affects both the mean and variance of the conditional distribution $p(s({\bf x})|\hat{\bf x}{(T)})$ in \eqref{eq:distance distribution}. From the mean perspective, retransmission helps align  the average of retransmitted samples with  its ground truth. To be specific, the received estimate approaches the ground-truth value as the number of retransmissions grows:\vspace{-12pt}
\begin{align}
\vspace{-12pt}
\lim_{T \to \infty} s(\hat{\bf x}{(T)}) \to s({\bf x}).
\end{align}
From the variance perspective, retransmission continuously reduces the variance by increasing the receive SNR or equivalently the number of retransmissions $T$. Particularly, it follows from the definition of SNR [see \eqref{eq: def SNR}] that 
\begin{align}
\vspace{-8pt}
\frac{1}{\SNR(T)} = O({1}/{T})\quad \text{and}\quad \lim_{T \to \infty}  \frac{1}{\SNR(T)} \to 0.
\end{align}
Combining the two aspects, one can further apply the Chernoff bound to \eqref{eq: define sigma d} and obtain:
\vspace{-8pt}
\begin{align}
 {\cal P}\l(\hat{\bf x}(T)\r) = 1 - O(e^{- a T}),\vspace{-10pt}
\end{align}
where $a > 0$ is a positive constant.  
As a result, the probability of noisy data alignment approaches one at an exponential rate as $T$ grows.
}
\end{remark}
\vspace{-8pt}

Last, given the data alignment probability in \eqref{eq:derived align prob}, it is ready to specify the aforementioned condition for ensuring the usefulness of wirelessly acquired data for learning as the following constraint on a received sample $\hat{\bf x}(T)$ with $T$ retransmissions: 
\begin{equation}\label{eq: confident level prob}
\vspace{-8pt}
\text{({\bf Data Alignment Constraint})}\quad {\cal P}\l(\hat{\bf x}(T) \r)>p_c,
\end{equation}
where $p_c \in (0.5, 1)$ is a given constant. 

\subsection{Importance ARQ for Binary Classification}\label{subsec: imp ARQ binary}
In this section, the importance ARQ protocol is designed for binary SVM classification under the data alignment constraint in \eqref{eq: confident level prob} and the optimal control policy is proved to have a threshold based structure. 

First, it is shown  that the constraint in \eqref{eq: confident level prob} leads to a varying receive-SNR constraint on a data sample that depends on its importance level. The  result is given below, which follows directly from the monotonicity of the error function.

\begin{proposition}\label{prop: adaptive threshold}
\label{lemma: confidence level requirement}\emph{Consider the training of a binary SVM classifier at the edge. For a received data sample $\hat{\bf x}(T)$, the data alignment constraint in \eqref{eq: confident level prob} is satisfied if and only if the receive SNR exceeds an  importance based threshold:
\begin{equation}\label{eq:confidence level requirement}
{ \SNR}{(T)}>\theta_{0}  \; \mathcal{U}_{\sf d}\l(\hat{\bf x}(T)\r), \vspace{-8pt}
\end{equation}
where $\mathcal{U}_{\sf d}\l(\cdot\r)$ is the uncertainty measure given in \eqref{eq: uncertainty} and $\theta_{0}= \l[\sqrt{2} {\rm erf}^{-1}\l(2p_c-1\r)\r]^2$.  
 }
\end{proposition}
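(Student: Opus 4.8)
The plan is to derive the threshold directly from the closed-form data-alignment probability in Proposition~\ref{prop: expr prob align} [see~\eqref{eq:derived align prob}] and to invert the constraint~\eqref{eq: confident level prob} step by step, using the strict monotonicity of the error function to preserve the ``if and only if'' at every stage. Since every transformation I apply is a strictly increasing bijection on the relevant domain, each implication is reversible, which is exactly what the equivalence claimed in the proposition requires.

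First I would substitute~\eqref{eq:derived align prob} into the constraint ${\cal P}(\hat{\bf x}(T))>p_c$ and isolate the error function, obtaining the equivalent inequality
\begin{equation*}
{\rm erf}\l(\sqrt{\SNR(T)}\,\frac{|s(\hat{\bf x}(T))|}{\sqrt{2}}\r) > 2p_c-1.
\end{equation*}
Because $p_c\in(0.5,1)$, the right-hand side lies in $(0,1)$; and because the argument on the left is nonnegative, I can appeal to the fact that ${\rm erf}$ restricted to $[0,\infty)$ is a strictly increasing bijection onto $[0,1)$. Applying ${\rm erf}^{-1}$ to both sides therefore preserves the inequality and its direction, yielding $\sqrt{\SNR(T)}\,|s(\hat{\bf x}(T))| > \sqrt{2}\,{\rm erf}^{-1}(2p_c-1)$.

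Next I would square both (nonnegative) sides and rearrange, giving $\SNR(T)\,|s(\hat{\bf x}(T))|^2 > \big[\sqrt{2}\,{\rm erf}^{-1}(2p_c-1)\big]^2 = \theta_0$. Dividing through by the positive quantity $|s(\hat{\bf x}(T))|^2$ and recalling from~\eqref{eq: def dist} and~\eqref{eq: uncertainty} that $1/|s(\hat{\bf x}(T))|^2 = 1/d^2(\hat{\bf x}(T)) = \mathcal{U}_{\sf d}(\hat{\bf x}(T))$ then recovers~\eqref{eq:confidence level requirement} exactly.

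The argument is essentially routine, so I do not expect a deep obstacle; the only point demanding care is verifying that each step is a genuine equivalence rather than a one-directional implication. The key facts are that ${\rm erf}^{-1}$ is well-defined and strictly increasing on $(0,1)$ (here the hypothesis $p_c>0.5$ is what keeps $\theta_0$ positive and the inversion legitimate), that squaring is order-preserving on the nonnegative reals, and that dividing by a strictly positive number preserves the inequality direction. The degenerate case $s(\hat{\bf x}(T))=0$ is handled automatically: the uncertainty $\mathcal{U}_{\sf d}$ diverges, so no finite $\SNR(T)$ can meet the threshold, consistent with the fact that such a sample never satisfies the alignment constraint.
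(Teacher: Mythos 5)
Your proof is correct and follows essentially the same route as the paper: the paper gives no separate derivation for this proposition, stating only that it ``follows directly from the monotonicity of the error function'' applied to the closed-form probability in Proposition~\ref{prop: expr prob align}, which is precisely the chain of equivalences you spell out (isolate $\mathrm{erf}$, invert it on $[0,\infty)$, square, and divide by $|s(\hat{\bf x}(T))|^2$ to recover $\mathcal{U}_{\sf d}$). Your explicit treatment of the degenerate case $s(\hat{\bf x}(T))=0$ is a detail the paper leaves implicit, and it correctly confirms the equivalence holds vacuously there.
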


It is remarked that the scaling factor $\theta_{0}$ in \eqref{eq:confidence level requirement} can be interpreted as a conversion ratio specifying the rate at which the uncertainty measure is translated into the SNR requirement. The factor grows as the data-alignment constraint,  $p_c$, becomes more stringent, and vice versa. 

Next, using the result in Proposition~\ref{prop: adaptive threshold}, the importance ARQ protocol is designed as follows. Since the effective receive SNR after combining is a monotone increasing function of the number of retransmission, the constraint in \eqref{eq: confident level prob} can be translated into a threshold based retransmission policy. On the other hand, the SNR threshold in \eqref{eq:confidence level requirement} can diverge for an extremely uncertain data sample. Hence, it is necessary to limit the threshold value to avoid resource-wasteful excessive retransmission. The resultant simple protocol is described as follows. 

\begin{framed}
\vspace{-10pt} 
\begin{protocol}[Importance ARQ for binary SVM classification]\label{scheme: importance ARQ}\emph{Consider the acquisition of a  data sample ${\bf x}$ from a scheduled edge device. The edge server repeatedly requests the device to retransmit   ${\bf x}$ until the effective receive SNR  satisfies 
\vspace{-8pt}
\begin{equation}\label{eq: adaptive SNR form}
{\!\! \SNR}{(T)}\!>\! \min ( \theta_{0} \; \mathcal{U}_{\sf d}\l(\hat{\bf x}(T)\r), \theta_{\SNR}),\vspace{-8pt}
\end{equation} 
where $\theta_{\sf SNR}$ is a given maximum  SNR.
}
\end{protocol}
\vspace{-6pt} 
\end{framed}

\begin{remark}[Importance-aware SNR control]\label{remark: adp SNR to un}\emph{The importance ARQ protocol is a threshold based control policy with a  SNR threshold adapted to data importance. From \eqref{eq: adaptive SNR form}, the SNR threshold is proportional to the distance-based uncertainty of the data sample, $\mathcal{U}_{\sf d}\l({\bf x}\r)$.  It is aligned with the intuition  that  a data sample of higher uncertainty should be more reliably  received.  To better understand this result, a graphical illustration is provided in Fig.~\ref{Fig: confidence level cal}.  For a pre-specified $p_c$, a highly uncertain sample near the decision hyperplane requires a slim  distribution  with small variance (corresponding to a higher receive SNR and hence more retransmissions) to meet the requirement on  the data-alignment probability (the area shaded in grey) to be larger than $p_c$ [see Fig.~\ref{Fig: confidence sharp}]. On the other hand,  for  a less uncertain data sample,  the requirement of $p_c$ can be easily satisfied  with a relatively flat distribution with a large variance and low receive SNR [see Fig.~\ref{Fig: confidence flat}].  
}
\end{remark}

Last, the importance ARQ protocol is compared with the conventional channel-aware counterpart. For the latter, the retransmission policy is merely  channel-aware, and a fixed SNR threshold is set for all data samples without differentiating their importance, as described below. 

\vspace{-10pt} 
\begin{framed}
\vspace{-15pt} 
\begin{protocol}[Channel-aware ARQ]\emph{Consider the acquisition of a  data sample ${\bf x}$ from a scheduled edge device. The edge server repeatedly requests the device to retransmit  ${\bf x}$ until the required  effective SNR, $\theta_{\sf SNR}$,   is attained: 
\begin{equation}\label{eq:channel aware}
{ \SNR}{(T)}>\theta_{\sf SNR},
\end{equation}
where ${ \SNR}{(T)}$ is defined in \eqref{eq: def SNR}.}
\end{protocol}
\vspace{-15pt} 
\end{framed}
\vspace{-15pt} 
\begin{remark}[Uniform vs. heterogenous reliability] \emph{As the SNR requirement in \eqref{eq:channel aware} is independent of data uncertainty, the channel-aware  protocol achieves \emph{uniform reliability} for data samples. If deployed in an edge learning system, it can lead to inefficient utilization of radio resource due to unnecessary retransmissions for unimportant data, resulting in sub-optimal learning performance. In contrast, the proposed importance ARQ protocol achieves \emph{heterogeneous reliability} for data samples according to their  importance levels. This allows more efficient resource utilization via improving the quality-vs-quantity tradeoff, thereby accelerating learning.}
\end{remark} 

\vspace{-15pt}
\subsection{Implementation of  Multi-Class Classification}\label{sec: multi-class}
 In this subsection, the principle of importance ARQ developed in the preceding sub-section for binary classification  is  generalized to  multi-class classification.  Note that a $C$-class SVM classifier can be trained using the  so-called \emph{one-versus-one} implementation \cite{platt2000large}. The implementation decomposes the classifier into  $L = C(C-1)/2$ \emph{binary component classifiers} each trained using the samples from the two concerned classes only. As a result, for each input data sample $\bold{x}$, a $C$-class SVM outputs a $L$-dimension vector, denoted as  $\bold{s} = [s_1({\bf x}),s_2({\bf x}), \cdots, s_L({\bf x})]$, which records the $L$ output scores as  defined in \eqref{eq: def score}, from the component classifiers. To map the output $\bold{s}$ to one of the class indexes, a so-called \emph{reference coding matrix} of size $C \times L$ is built and denoted by $\bold{M}$, where each row gives the ``reference output pattern'' corresponding to the associated class. An example of the reference coding matrix with $C$ = 4  and hence $6$ binary component classifiers   is provided as follows:
\vspace{-4mm}
\begin{small}
\begin{align*}& \bold{M}  \!\!= \quad
\bordermatrix{%
    & _{\rm  binary 1}      &  _{\rm binary 2}    & _{\rm binary 3}   &_ {\rm binary 4} & _{\rm binary 5} & _{\rm binary 6} 
\cr 
_{{\rm class1}}    & 1         & 1       &1     & 0  & 0  & 0\cr
_{\rm class2}    & -1        & 0       &0     & 1  &1   &0 \cr
_{\rm class3}    &0          &-1      &0      &-1  &0    &1 \cr
_{\rm class4}    & 0         & 0       &-1     &0   &-1   &-1 \cr
}.
\end{align*}
\end{small}
\vspace{-4mm}

\noindent Given  $\bold{M}$, the prediction of the class index of $\bold{s}$  involves simply comparing the Hamming distances between $\bold{s}$ and different rows in $\bold{M}$, and choosing the row index with the smallest distance as the predicted class index. Particularly, the Hamming distance between $\bold{s}$ and the $c$-th row of $\bold{M}$ is defined by
\vspace{-10pt}
\begin{align}\label{Hamming_dist}
d(\bold{s}, \bold{m}_c) = \sum_{\ell=1}^{L}|m_{c\ell}|[1-{\rm sgn}(m_{c\ell}s_\ell({\bf x}))]/2,
\end{align}
 where $m_{c\ell}$ denotes the $\ell$-th element in vector $\bold{m}_c$, and $\rm sgn(x)$ denotes the sign function taking a value from $\{1,0,-1\}$ corresponding to the cases $x>0$, $x=0$ and $x<0$, respectively.  
 One can observe from the distance definition that not all the component classifiers' output scores have an effect on predicting a particular class. For example, the scores from binary classifiers  $2,3$ and $6$ have no effect on determining class $2$ as they are assigned a zero weight in computing the Hamming distance between $\bold{s}$ and $\bold{m}_2$. In other words, only binary classifiers  $1,4$ and $5$ are active when class $2$ is predicted.
 
Having obtained the predicted label $\hat{c} = \arg\min_{c} d(\bold{s}, \bold{m}_c)$, all the active component classifiers determining the current predicted label should satisfy the requirement of data alignment probability predefined in \eqref{eq: confident level prob}.  Consequently, the \emph{single-threshold policy} for importance ARQ defined in \eqref{eq: adaptive SNR form} can be then extended to a \emph{multi-threshold policy} as defined below:
 \begin{equation}\label{eq: adaptive SNR form multicalss}
{ \SNR}{(T)}>\frac{\theta_{0}}{|s_\ell(\hat{\bf x}(T))|^2}, \quad \forall  \ell \in \{\ell \; |\; m_{\hat{c}\ell} \neq 0\}.
\end{equation}

\section{Extension to General Classifiers}\label{sec: extension}
 
In this section, we extend the proposed importance ARQ protocol designed in the preceding section for the SVM classifier  model  to a general   model, and present a case study using the modern CNN model.   

\vspace{-10pt}
\subsection{Importance ARQ for a Generic Model}
The derivation of  Protocol \ref{scheme: importance ARQ} targets for SVM and may not be directly extended to a generic classifier model (e.g., CNN), due to the lack of explicitly defined decision boundaries, and thus an explicit distance based uncertain measure. Nevertheless, the following   insight derived for the SVM model  is  applicable to a generic model: \emph{the receive-SNR threshold in wireless data acquisition with retransmission should be adapted to data uncertainty.} This motivates the generalization of the importance ARQ protocol by modifying  Protocol $1$ as follows. 

\vspace{-15pt} 
\begin{framed}
\vspace{-10pt} 
\begin{protocol}[Importance ARQ for generic classifier]\label{scheme: importance ARQ general}\emph{Consider the acquisition of a  data sample ${\bf x}$ from a scheduled edge device. The edge server repeatedly requests the device to retransmit  ${\bf x}$ until 
\begin{equation}\label{eq: adaptive SNR form general}
{\!\! \SNR}{(T)}\!>\! \min \Big(\theta_{0}\; \mathcal{L}(\mathcal{U}_{\sf x}\l(\hat{\bf x}(T)\r)) , \theta_{\SNR}\Big),
\end{equation}
where $\mathcal{U}_{\sf x}$  is an uncertainty measure,  $\theta_{0}$ is a given conversion  ratio between the uncertainty measure and the target SNR,   and $\mathcal{L}(\cdot)$ is a monotonically increasing function. 
\vspace{-15pt} }
\end{protocol}
\end{framed}
\vspace{-10pt} 

The main difference of the generic protocol from Protocol $1$ for SVM is that the distance-based uncertainty measure  in the latter is replaced by a general \emph{monotonically increasing} function of a  general uncertainty measure. The function is called (uncertainty)  \emph{reshaping function}. The  main motivation for introducing the function  is to accommodate various forms of uncertainty measures. In particular, this function provides the flexibility to reshape a  selected uncertainty  measure to allow it to have certain desired properties  as discussed in the sequel. Furthermore, the monotonicity of the function enforces the intuition that more uncertain data should be more reliably received.  

To apply the general Protocol~\ref{scheme: importance ARQ general} to training a specific classifier  model, the uncertainty measure, the reshaping function, and  the conversion ratio should be carefully designed  for efficient radio-resource utilization to achieve the desired  learning performance. Several {\bf design guidelines} are provided as follows.
\begin{itemize}
\item \emph{Selection of Uncertainty Measure:}
In general, the uncertainty measure should be selected for ease of computation according to the output of the learning model.  For example, for  SVM, the output score evaluated by \emph{linear decision boundaries} allows easy evaluation of the distance-based uncertainty in \eqref{eq: def dist}. In contrast, for  CNN, the \emph{softmax} output, which gives the posterior probability for each predicted class, makes the \emph{entropy} in \eqref{eq: entropy} a more natural choice for measuring  uncertainty.  

\item\emph{Design of Reshaping Function and Conversion Ratio:}
The reshaping function and the conversion ratio  should be jointly designed to address the following two practical issues. 

\begin{itemize}
\item \emph{Unregulated SNR for Data with Zero Uncertainty:} The minimum value of some uncertainty measures, e.g. entropy,  can be zero. Its direct use  in  \eqref{eq: adaptive SNR form general} without proper modification may lead to a corrupted training dataset. To be more specific, since the corresponding SNR thresholds have zero values,  data samples with zero uncertainty fail to  trigger retransmission and thus may be  received with unacceptably low  reliability in the case of strong noise. The use of such  corrupted  data in model training  can  cause model  divergence. This issue can be addressed  by a proper design of the reshaping function. 

\item \emph{Low Differentiability in SNR Threshold:} An issue can arise in practice due to a  narrow dynamic range of a selected uncertainty measure.  For example, if the uncertainty is measured by entropy, the corresponding dynamic range is given by $ \mathcal{U}_{\sf e}\l({\bf x}\r) \in [0,\log C]$, where $C$ denotes the number of classes. For $10$-class classification, we have $\mathcal{U}_{\sf e}\l({\bf x}\r)\in [0, 2.3]$, which can be too narrow in retransmission implementation.    In particular, without any reshaping function or a suitable conversion ratio, the SNR thresholds set as in  \eqref{eq: adaptive SNR form general} for the most and least important data would be about  the same, making importance ARQ insensitive to uncertainty and barely ``importance aware''.  
\end{itemize}

\end{itemize}

\vspace{-13pt}
\subsection{Implementing Importance ARQ for CNN}

In this subsection, we use CNN as an example to illustrate how the generic  importance ARQ in Protocol $3$  can be particularized to a mode of choice  based on the guidelines in the preceding sub-section. To begin with, as discussed,  \emph{entropy} is chosen as a suitable measure of  data uncertainty for  CNN.
Then, we design the reshaping function to have  the following form: $ \mathcal{L}(x) = 1+\gamma x$, where $\gamma$ is a scaling factor to be determined in the sequel.\footnote{An alternative such as  the nonlinear increasing functions $\mathcal{L}(x) = {(1+x)}^\gamma$ is also a suitable  choice as verified by experiments.}  Note that the bias term $1$ in $\mathcal{L}(x)$ is added to address the issue  of  zero SNR threshold. Particularly, we set the bias term to be $1$ rather than other positive values as it allows the conversion ratio   $\theta_{0}$ to be also interpreted as the minimum quality requirement for the least uncertain data with the entropy being zero.  This allows $\theta_{0}$ to be set easily following the typical settings in a wireless communication system (e.g.,  $\theta_{0}$ = $10$ dB). Note from \eqref{eq: adaptive SNR form general} that  $\theta_{\SNR}$ denotes the maximum quality requirement for the data with  the largest  uncertainty. Thus the scaling factor $\gamma$ can be determined by solving the equality
$\theta_{0}\l[1+\gamma \mathcal{U}_{\max}\r]=\theta_{\SNR}$ where the maximum entropy  $\mathcal{U}_{\max} = \log C$.  The above designs lead to the importance ARQ for the CNN classifier as shown below.
\vspace{-15pt} 
\begin{framed}
\vspace{-10pt} 
\begin{protocol}[Importance ARQ for CNN]\label{scheme: importance ARQ CNN}\emph{Consider the acquisition of a  data sample ${\bf x}$ from a scheduled edge device for training a CNN classifier model.   The edge server repeatedly requests the device to retransmit   ${\bf x}$ until 
\begin{equation}\label{eq: adaptive SNR form CNN}
{\!\! \SNR}{(T)}\!>\! \min \Big(\theta_{0}\l[1+\gamma \; \mathcal{U}_{\sf e}\l(\hat{\bf x}(T)\r)\r] , \theta_{\SNR}\Big),
\end{equation}
where $\gamma$ is a scaling factor given as $\gamma=\frac{1}{\mathcal{U}_{\max}}\l(\frac{\theta_{\SNR}}{\theta_{0}}-1\r)$. }
\vspace{-15pt} 
\end{protocol}
\end{framed}
\vspace{-10pt} 

\section{Experimental Results}\label{sec:simulation}

\vspace{-10pt}

\subsection{Experiment Setup}
\subsubsection{Channel Model} We assume the classic Rayleigh fading channel with channel coefficients following i.i.d. complex Gaussian distribution ${\cal CN}(0,1)$. The average transmit SNR defined as {\color{black}$\bar{\rho}=P/\sigma^2$ is by default set as $4$ dB.

\subsubsection{Learning Performance Metrics} 
The performance metrics are defined separately for the   cases of  \emph{balanced} and \emph{imbalanced} training datasets, depending on whether the dataset has more instances of certain classes than others.  A balanced dataset is an ideal setting while the  imbalanced setting is more likely to happen in real-world applications, e.g., fraud detection, medical diagnosis and network intrusion detection \cite{sun2009classification}. Given a balanced dataset, the learning performance is measured by the test accuracy. However, the overall accuracy is unable to reflect the performance using a  highly skewed dataset. For example, a naive  classifier that predicts all test samples as the majority class could achieve a high accuracy. However, it is unable to detect the minority  but critical class. To tackle the issue, two performance  metrics, i.e.,  G-mean and F-measure, widely used for imbalanced classification are adopted \cite{sun2009classification}. Both  are based on the following confusion matrix defined for binary classification for imbalanced data, where positive and negative classes correspond to minority and majority classes, respectively: 
\begin{small}
\begin{align*}& \bold{Confusion \ Matrix} \! = \
\bordermatrix{%
    & {\rm  predicted\ positive}      &  {\rm predicted \ negative}   
\cr 
{{\rm real\ positive }}    &{ \rm true \ positive \ (TP)}         &{\rm false\ negative\ (FN) }       \cr
{\rm real\ negative}    & {\rm false\ positive\ (FP) }          & {\rm true\ negative\ (TN) }       \cr
}.
\end{align*}
\end{small}
\noindent Based on  the confusion matrix, several useful  metrics can be defined, followed by the definitions of G-mean and F-measure: 
\begin{align*}
{\rm recall}&=\frac{\rm TP}{\rm TP+FN}, \quad {\rm specificity}=\frac{\rm TN}{\rm TN+FP}, \quad {\rm precision}=\frac{\rm TP}{\rm TP+FP},\\
{\rm G\!-\!mean}&=\sqrt{{\rm recall}\times {\rm specificity}},\quad 
{\rm F\!-\!measure}=\frac{2\times {\rm precision }\times {\rm recall}}{{\rm precision}+{\rm recall}}.
\end{align*}
Recall and specificity measure the relevance between the predicted and ground-true results for the positive class and negative class, respectively.  On the other hand, precision is the prediction accuracy for the positive class. As seen, G-mean is the geometric mean of recall and specificity, representing the average detection rate of positive and negative classes.  However, one may be only interested in the highly effective detection for the rare case in some applications, e.g.,  cancer detection.  In this case, F-measure is adopted which concerns only the positive class, integrating the detection and prediction rates as a single metric.  

\subsubsection{Experimental Dataset} We consider the learning task of training classifiers using the well-known MNIST dataset of handwritten digits as described in Section~\ref{sec: retransmission needed}.   The training and test sets consist of  $60,000$ and $10,000$ samples, respectively.  Each sample is a grey-valued image of $28 \times 28$ pixels that gives the sample dimensions $p=784$.  For binary classification, we consider both balanced and imbalanced datasets.  For a balanced dataset, we choose the relatively less differentiable class pair of  ``3" and ``5" (according to t-SNE visualization).  For an imbalanced data set, the relatively compact class  ``1" is chosen as the minority class and the majority class is made up of  the remaining classes.     The training set used in experiments is partitioned as follows.  At the edge server, the priorly available collection of clean observations ${\cal L}_0$ are constructed by randomly sampling the global training dataset based on fixed  ratios over classes: a)  $2$ samples for each class for the case of balanced data; b) $1$ sample for minority and $8$ samples for majority for the case of imbalanced data.  The remaining training data are evenly and randomly distributed over edge devices. 
The maximum transmission budget $N$ is set to be $4000$ and $20,000$ (channel uses) for binary and multi-class classification, respectively.  All results are averaged over 200 and 20 experiments for binary and multi-class cases, respectively.

\subsubsection{Learning Model Implementation} The considered classifier models include the following: binary SVM, multi-class SVM, and CNN.  For binary SVM,  the soft-margin SVM is implemented with slack variable set as $1$.   \emph{Iterative Single Data Algorithm} (ISDA) \cite{kecman2005iterative} is used for solving the SVM problem with maximum $10^6$ iterations. The multi-class SVM is built on $45$ binary SVMs as described in Section~\ref{sec: multi-class}.  For the implementation of CNN, we use a  $6$-layer CNN as illustrated in Fig.~\ref{Fig: CNN}, including two $3 \times3$ convolution layers with batch normalization before ReLu activation (the first with $16$ channels, the second with $32$), the first one followed with a $2\times2$ max pooling layer and the second one followed with a fully connected layer, a softmax layer, and a final classification layer.   The model is trained using stochastic gradient descent with momentum~\cite{sutskever2013importance}. The mini-batch size is $2048$, and the number of epochs is $120$.  To accelerate training, the CNN is updated in a batch mode with the incremental sample size as $10$.  
}
\vspace{-4mm}
\subsection{Quality-vs-Quantity Tradeoff}
\begin{figure}[t]
\centering
\subfigure[Channel-aware ARQ]{
\label{Fig: SNR threshold}
\includegraphics[width=6.5cm]{./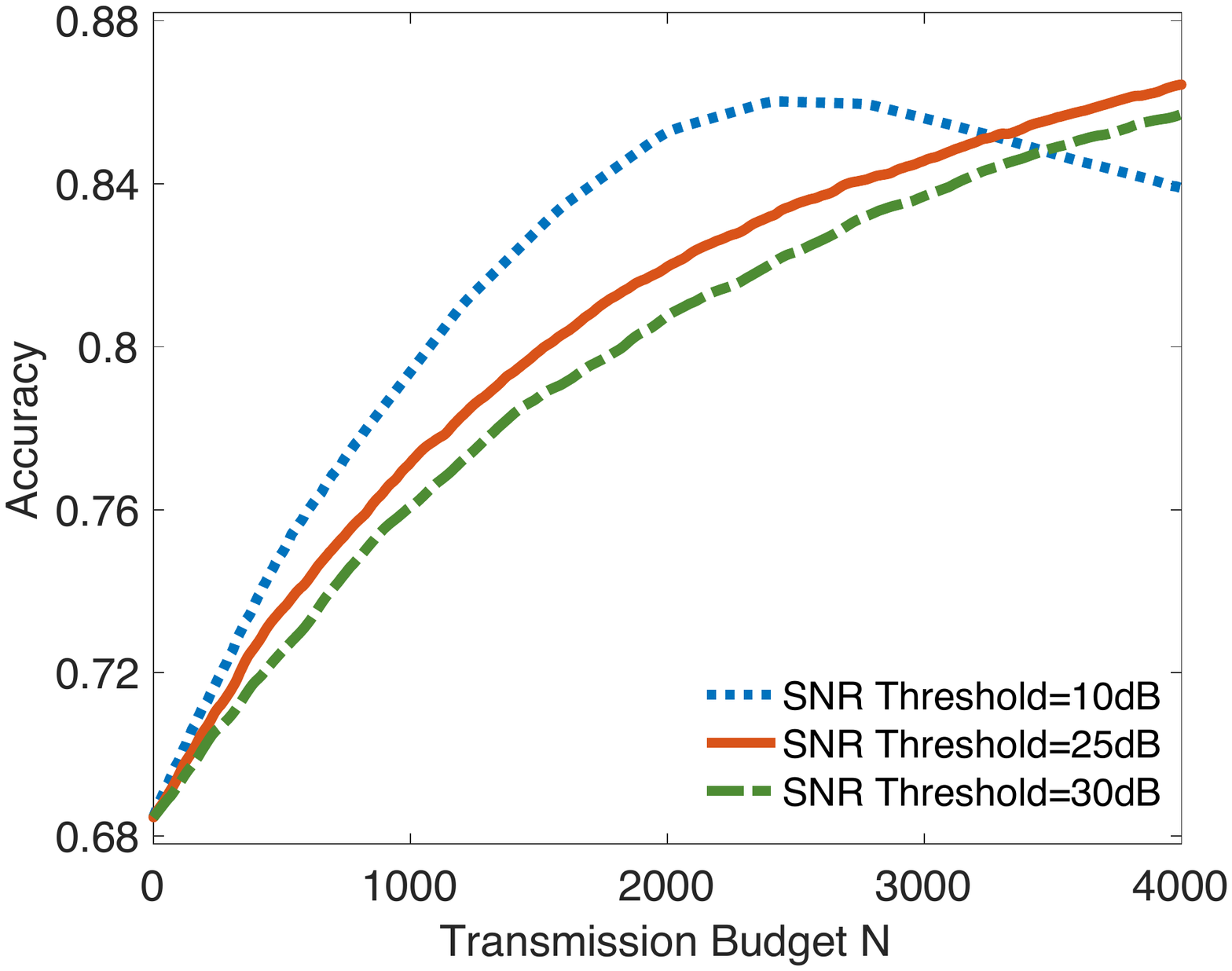}}
\subfigure[Importance ARQ]{
\label{Fig: confidence level}
\includegraphics[width=6.5cm]{./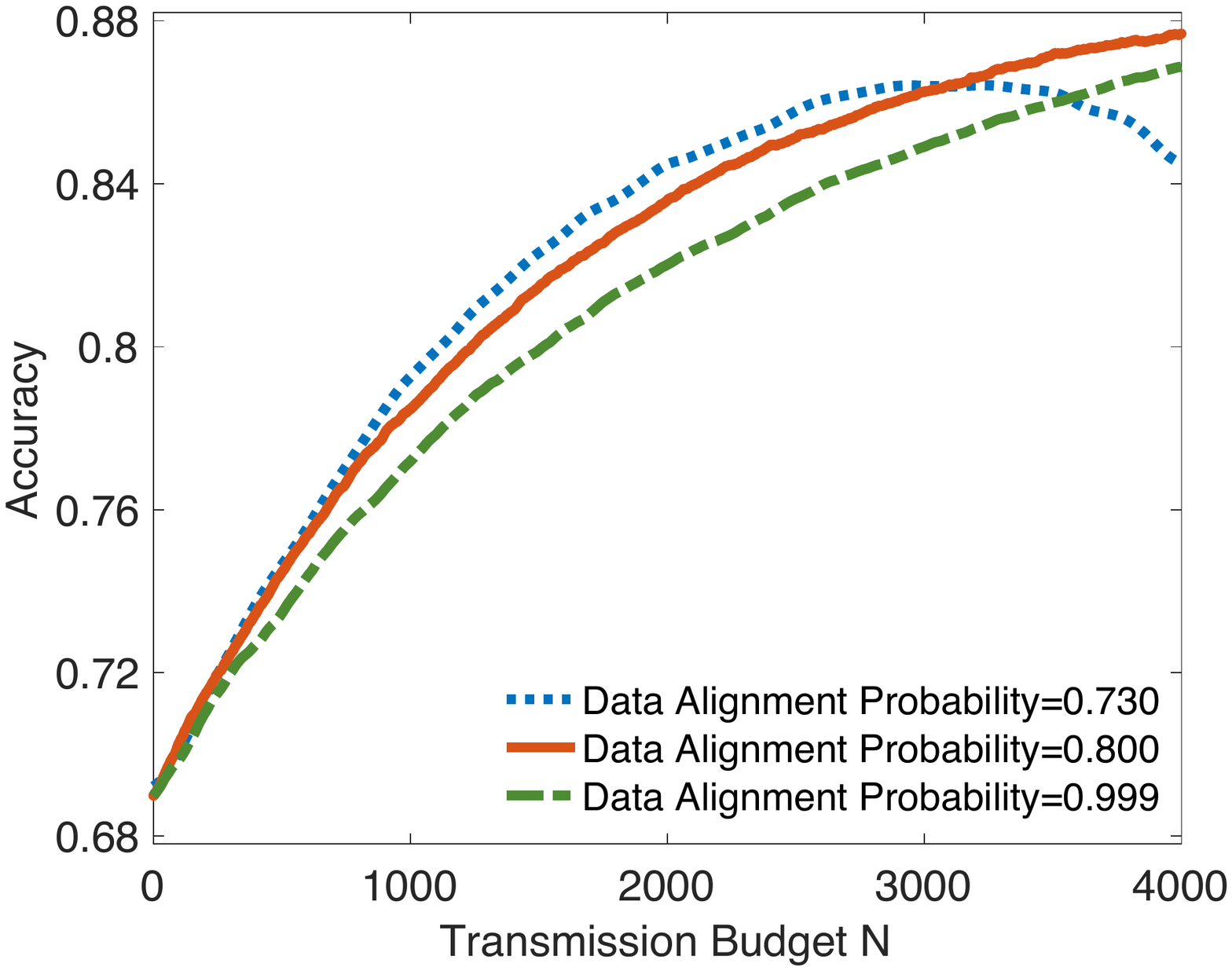}}
\caption{Quality-vs-quantity tradeoff in wireless data acquisition.}
\vspace{-6mm}
\label{Fig: parameters}
\end{figure}

To demonstrate the quality-vs-quantity  tradeoff in wireless data acquisition, 
Fig.~\ref{Fig: parameters} displays the curves of learning accuracy versus transmission budget  for both channel-aware ARQ and importance ARQ.  In Fig.~\ref{Fig: SNR threshold}, we test the performance of channel-aware ARQ with three SNR thresholds, i.e., $\theta_{\sf SNR}=10,\ 25$ and $30$ dB, from low to high data-reliability  requirements.  Similar cases  for importance ARQ  are considered   in Fig.~\ref{Fig: confidence level}  with the reliability  requirements specified by  the data-alignment probability: $p_c=0.730,\ 0.800$ and $0.999$. It is observed from both Fig.~\ref{Fig: SNR threshold} and \ref{Fig: confidence level} that setting the thresholds too low (e.g., $\theta_{\sf SNR}=10$ and $p_c=0.730$) leads to a fast convergence rate but at a cost of performance degradation as the errors accumulate. In contrast, a too high threshold (e.g., $\theta_{\sf SNR}=30$ and $p_c=0.999$) also leads to  poor learning performance due to insufficient acquired  samples.   This suggests that the retransmission threshold should be carefully picked  for optimizing the quality-vs-quantity tradeoff and thereby improving the learning performance. In the following experiments, we select thresholds based on observations in this sub-section to optimize  performance.  

\vspace{-15pt}

\subsection{Learning Performance for Balanced Data}

\begin{figure}[t]
\centering
\subfigure[Learning performance for different values of average transmit SNR~$\bar{\rho}$.]{
\label{Fig: binary performance}
\includegraphics[width=13cm]{./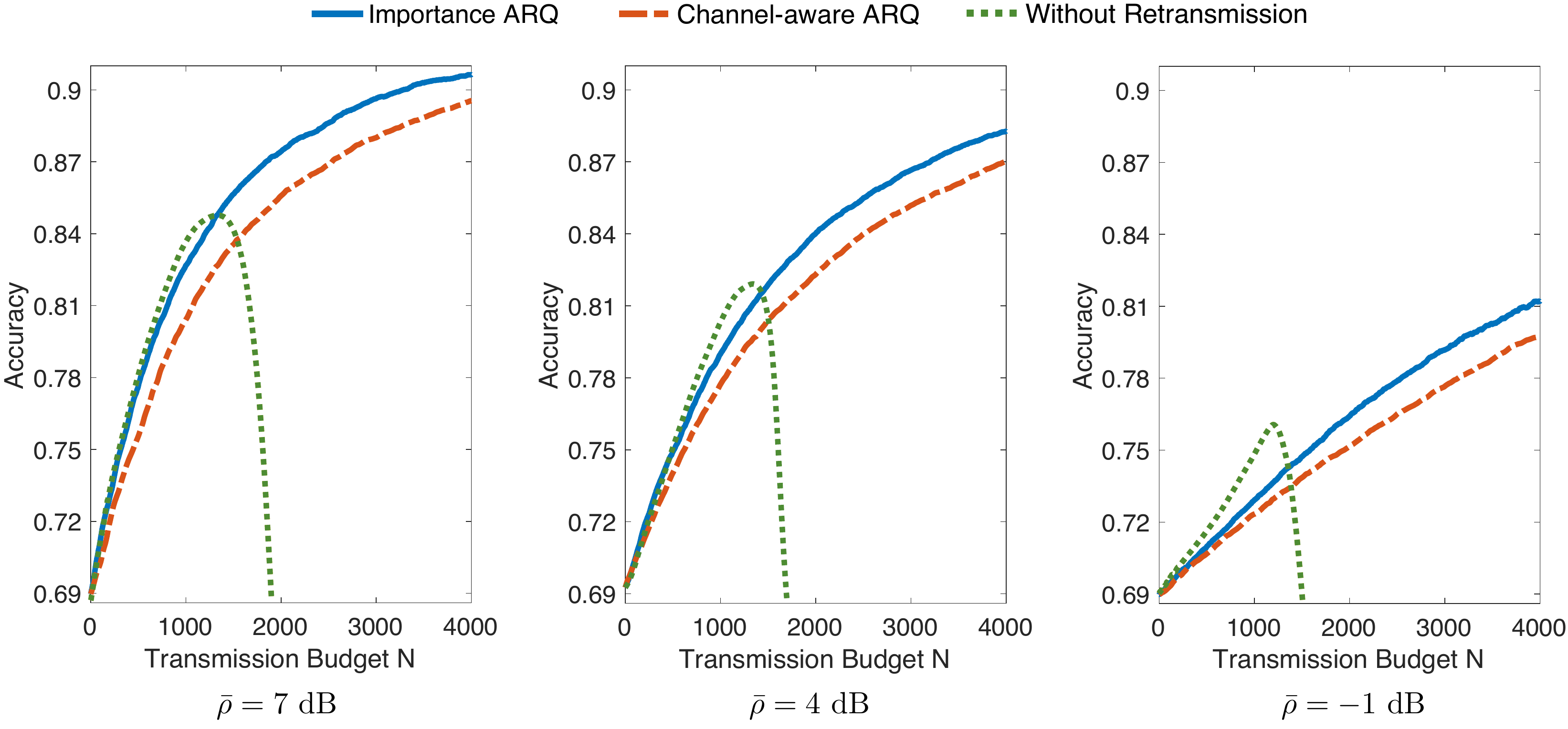}}
\subfigure[Retransmission Distribution]{
\label{Fig: binary retrans}
\includegraphics[width=6cm]{./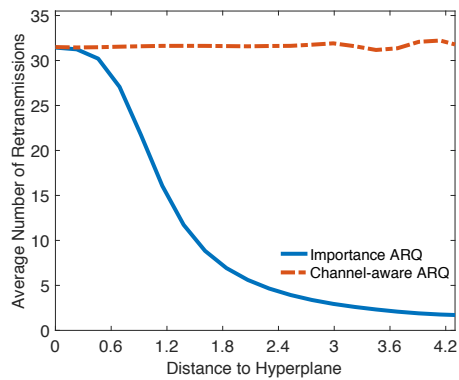}}
\caption{Learning performance for a  binary SVM   classifier trained using with wirelessly acquired  data. }
\vspace{-6mm}
\label{Fig: Binary}
\end{figure}

\subsubsection{Binary SVM Classification} In Fig.~\ref{Fig: Binary}, the learning performance of the proposed importance ARQ is compared with two baseline protocols, namely the channel-aware ARQ and the protocol without retransmission (maximum data quantity). 
It is observed that the performance of edge learning without retransmission dramatically degrades  after acquiring a sufficiently large number of  noisy samples. This is aligned with our previous observations from  Fig. \ref{fig: FixedRetrans} and justifies the need for retransmission. Next, one can observe  that importance ARQ outperforms the conventional channel-aware ARQ throughout the entire training duration. This confirms the performance gain from the intelligent resource utilization in  data acquisition.  Furthermore, the performance gain of importance ARQ  is almost the same in varying SNR scenarios. This demonstrates the robustness of the proposed protocol against the hostile channel condition.

In Fig. \ref{Fig: binary retrans}, we  further investigate the underlying reason for the performance improvement of importance ARQ by plotting the distribution of average numbers of retransmissions over a range of sample uncertainty (inversely proportional to sample distance to the decision hyperplane). One can observe close-to-uniform  distribution for conventional channel-aware ARQ corresponding to uncertainty independence. In contrast, for importance ARQ, retransmission is concentrated in the high uncertainty region. This is aligned with the design principle and shows its effectiveness in adapting retransmission to data importance.

\subsubsection{Multi-class SVM Classification}  In Fig.~\ref{Fig: multicalss perf}, the learning performance of the proposed importance ARQ is compared with two baseline protocols in the scenario of multi-class classification. Similar trends as in the binary-class scenario are observed, and the importance ARQ is found  to consistently outperform the benchmarking protocols in this more challenging scenario. The relation between importance ARQ and the multi-cluster  structure of the training dataset  is illustrated in Fig. \ref{Fig: distr clean importance aware}.  The blue bar indicates that samples in different classes in general have distinct average distances to their corresponding decision hyperplanes and thus with different uncertainty (inverse of distance). From the yellow bar, one can observe that importance ARQ can effectively adapt the average retransmission budget for different classes to their uncertainty levels. For example, class 5 has the shortest average distance to the hyperplane thus is allocated the largest transmission budget to protect its receive quality. In contrast, class 0 has the longest distance, and thus consumes less budget as desired.

\begin{figure}[t]
\centering
\subfigure[Performance evaluation.]
{\label{Fig: multicalss perf}
\includegraphics[width=6.5cm]{./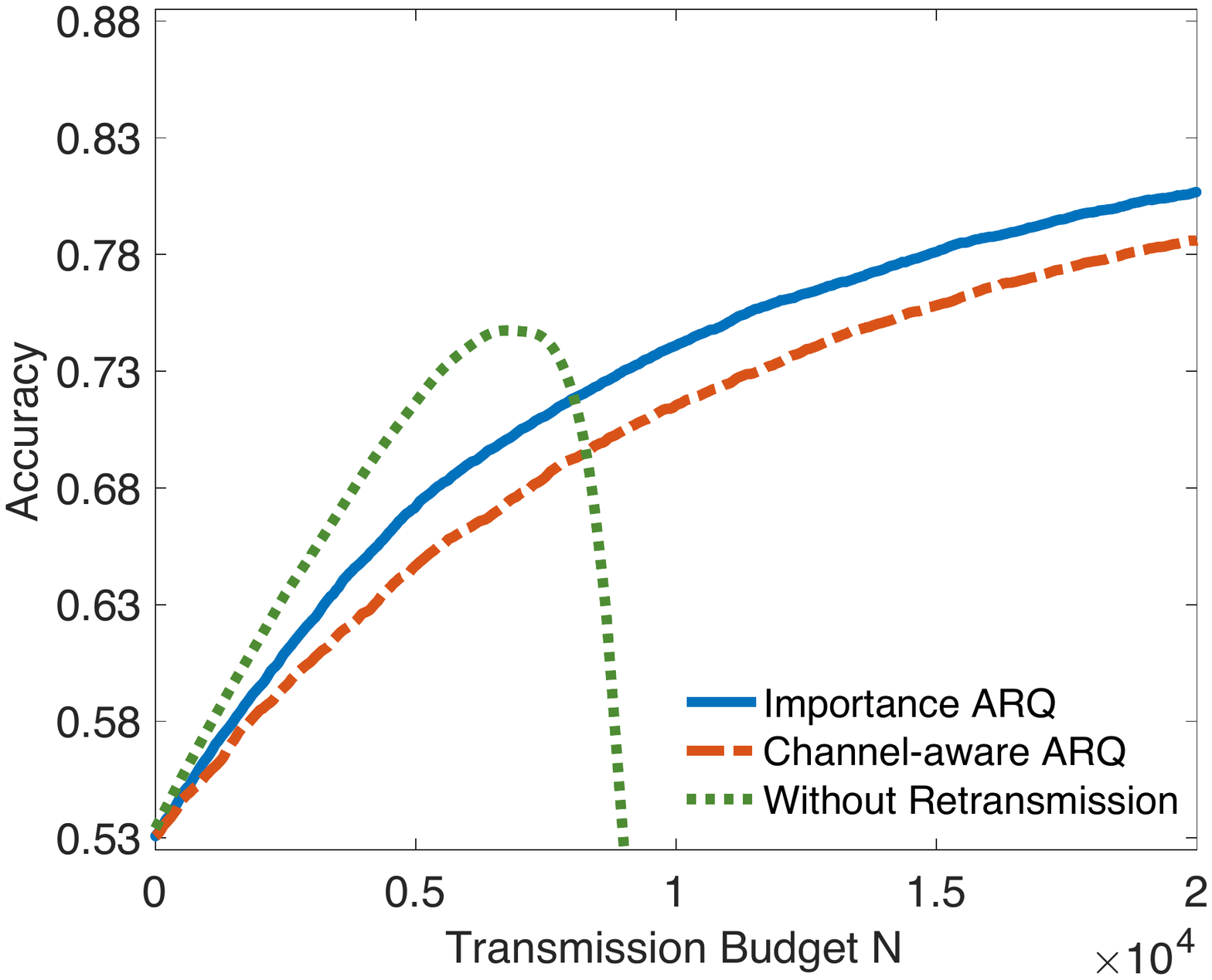}}
\subfigure[Uncertainty v.s. Retransmission.]
{\label{Fig: distr clean importance aware}
\includegraphics[width=7cm]{./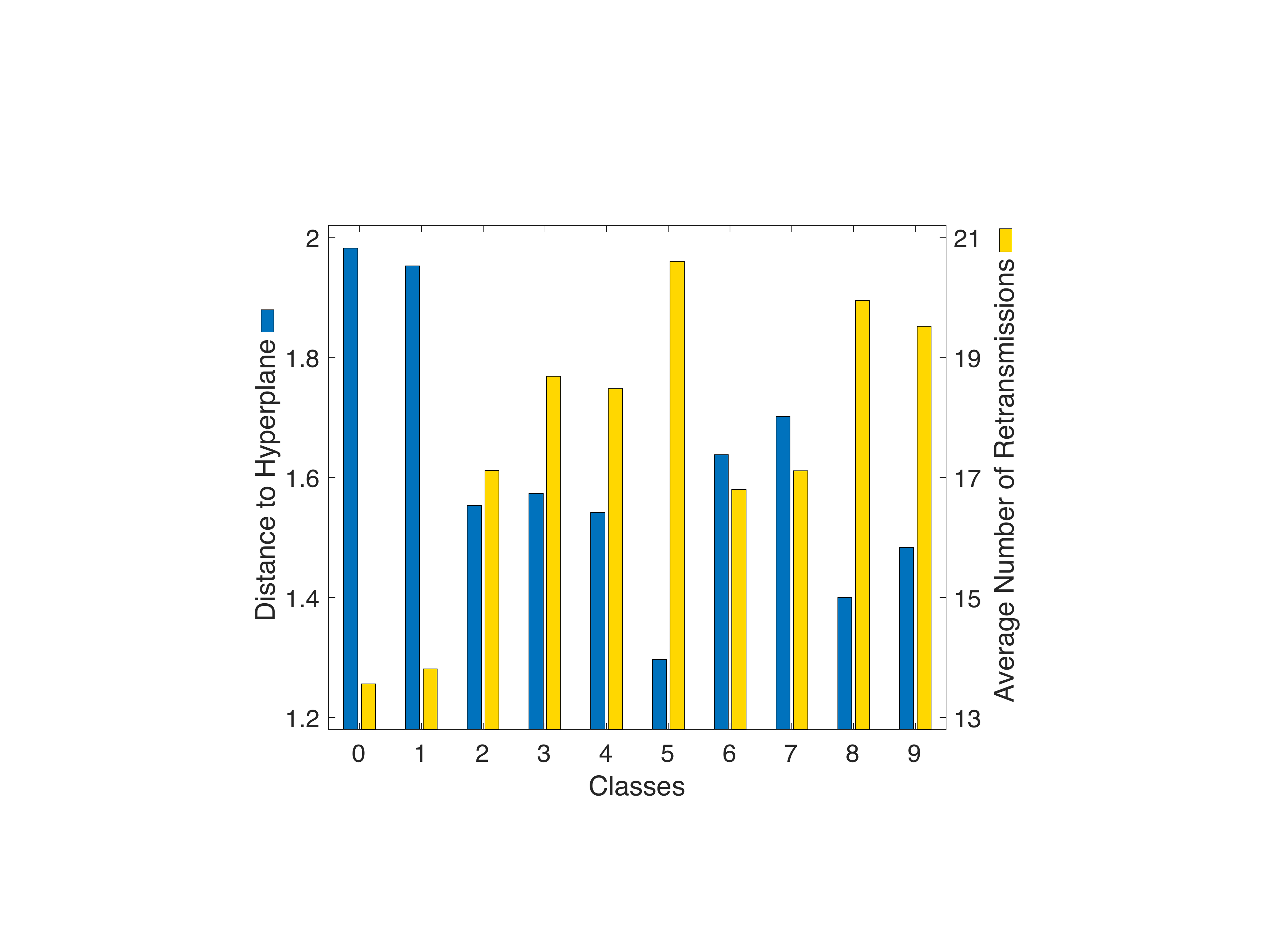}}
\caption{Learning performance evaluation for  multi-class SVM classification.}
\vspace{-6mm}
\label{Fig: Multi-class}
\end{figure}

{\color{black}

\subsubsection{Multi-class Classification of CNN} Our heuristic design for CNN is tested in the scenario of multi-class classification and the related results are provided in Fig.~\ref{Fig: Multi-class CNN}.  Fig.~\ref{Fig: Balanced perf CNN} displays the learning performance adopting entropy based uncertainty, which consistently outperforms two baseline protocols.  One can notice that without retransmission the performance of CNN quickly degrades especially compared with the previous result for SVM, which implicates that CNN is more sensitive to noisy environment.  This is due to the fact that, in CNN classifier, all samples contribute to define the decision hyperplane.  As a result, the noisy effect accumulates faster in CNN than SVM where only a few support vectors determine the decision hyperplane.  Therefore, CNN in general requires more retransmission to attain a higher receive SNR to guarantee the learning performance as shown in Fig.~\ref{Fig: Balanced distr CNN }.  Besides, Fig.~\ref{Fig: Balanced distr CNN } shows the linear relationship between entropy and the number of retransmissions, which is consistent with our design.

\begin{figure}[t]
\centering
\subfigure[Performance evaluation.]
{\label{Fig: Balanced perf CNN}
\includegraphics[width=6.5cm]{./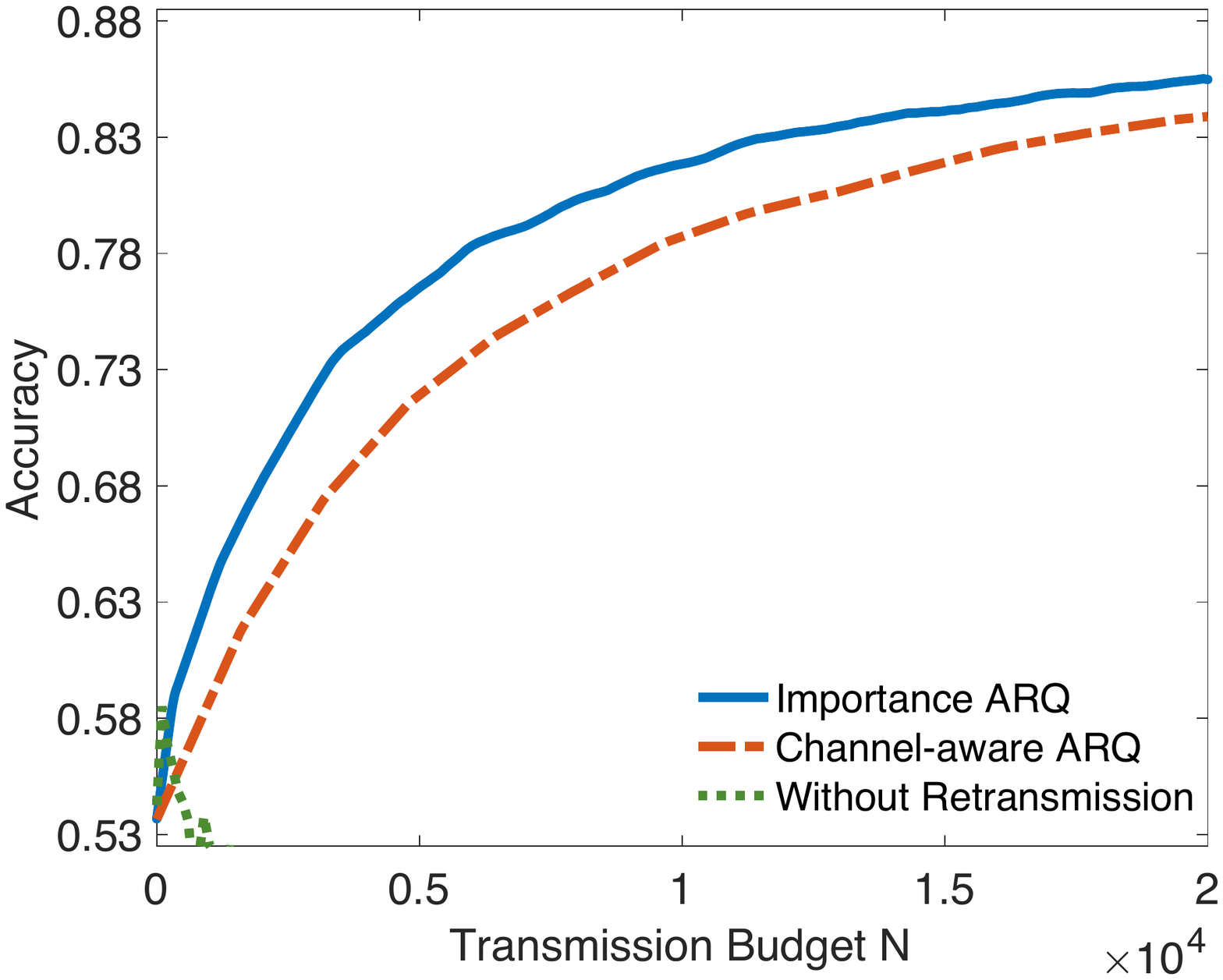}}
\subfigure[Retransmission Distribution.]
{\label{Fig: Balanced distr CNN }
\includegraphics[width=6.3cm]{./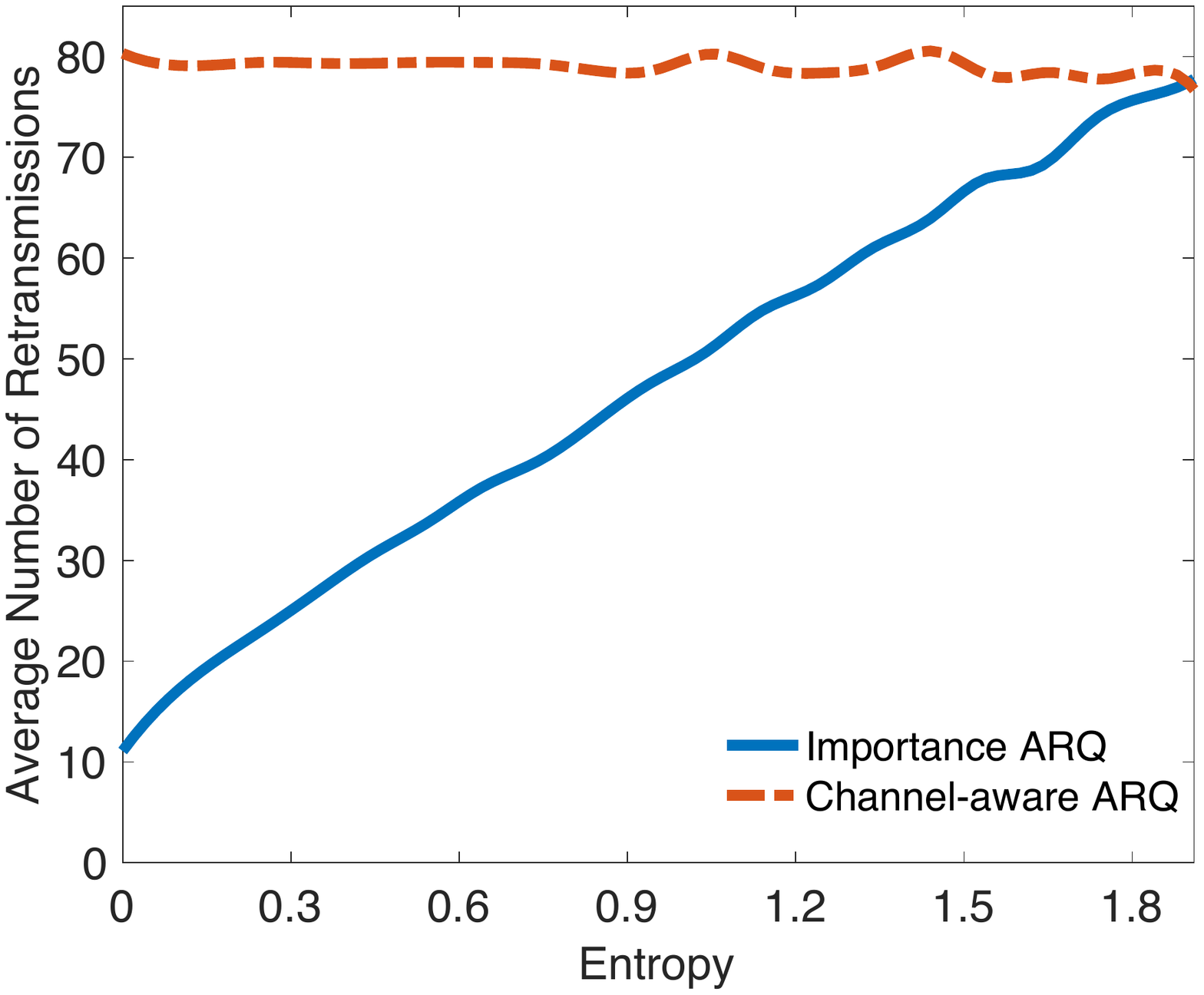}}
\caption{Learning performance evaluation for multi-class CNN classification.}
\vspace{-6mm}
\label{Fig: Multi-class CNN}
\end{figure}

\vspace{-10pt}
\subsection{Learning Performance for Imbalanced Data}

\subsubsection{Imbalanced Classification of SVM} In Fig.~\ref{Fig: imbalanced performance SVM}, both F-measure and G-mean of the proposed importance ARQ are compared with two baseline protocols in the scenario of imbalanced classification by using SVM. Compared with balanced classification (Fig.~\ref{Fig: binary performance}), the performance curves in the imbalanced setting degrade faster if no retransmission is made, which implicates that imbalanced classification is more vulnerable to the hostile channel environment.  This fact calls for an intelligent retransmission protocol to regulate the quality of each training sample.  One can notice that  importance ARQ could achieve a larger gain in imbalanced classification (nearly 10$\%$ performance improvement is observed compared with the conventional channel-aware ARQ).  To further investigate the underpinning reason, we visualize the imbalanced dataset by using t-SNE and plot the relationship between retransmission and uncertainty.  The left subfigure in Fig.~\ref{Fig: imbalanced retrans} shows that, the minority class has a higher uncertainty value since the average distance to the hyperplane is shorter than the majority one.  This is aligned with the blue bar in the right subfigure.  It is also observed that a highly uncertain minority class consumes more retransmission budget in importance ARQ, as shown by the red bar.  However, the green bar  shows that channel-aware ARQ allocates equal transmission budgets to both majority and minority classes.  The superiority of importance ARQ in the balanced setting further substantiates the theoretical gain brought by the intelligent adaptation of the radio resource allocation according to the data importance.  

\begin{figure}[t]
\centering
\subfigure[Learning performance measured by F-measure and G-mean.]{
\label{Fig: imbalanced performance SVM}
\includegraphics[width=13.4cm]{./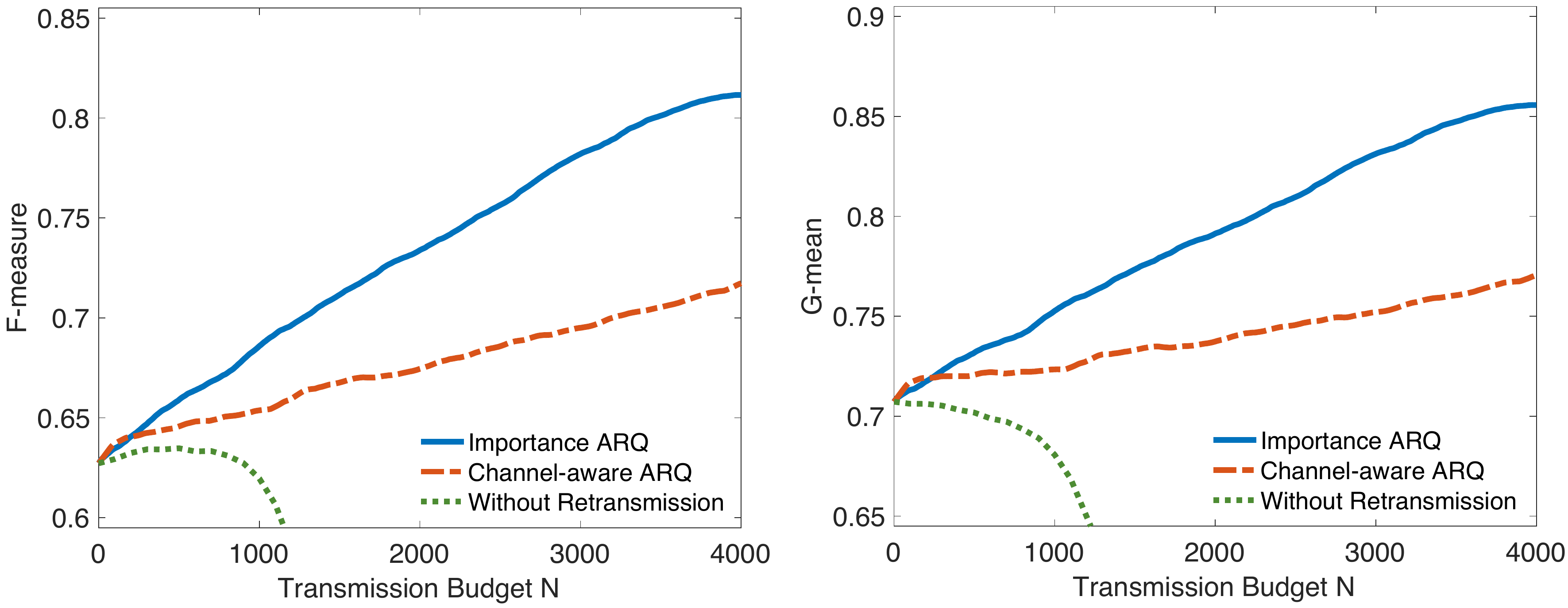}}
\subfigure[Uncertainty v.s. Retransmission.]{
\label{Fig: imbalanced retrans}
\includegraphics[width=16cm]{./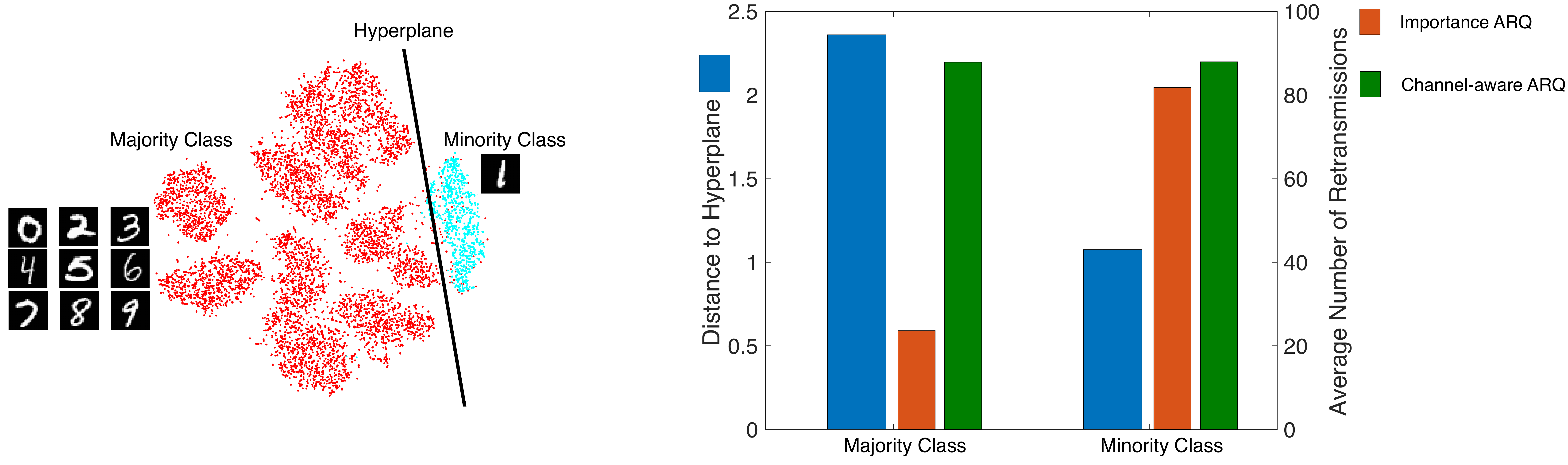}
}
\caption{Learning performance for SVM classifier training using  imbalanced data. }
\vspace{-6mm}
\label{Fig: Binary Imbalanced}
\end{figure}

\subsubsection{Imbalanced Classification of CNN} In Fig.~\ref{Fig: Imbalanced CNN}, F-measure and G-mean are examined in the imbalanced classification by deploying CNN.  Similar trends as the SVM classifier are observed, and the importance ARQ is found to consistently outperform the benchmarking protocols, which confirm  the effectiveness of our extension in Section~\ref{sec: extension}.

\begin{figure}[t]
\begin{center}
{\includegraphics[width=13.4cm]{./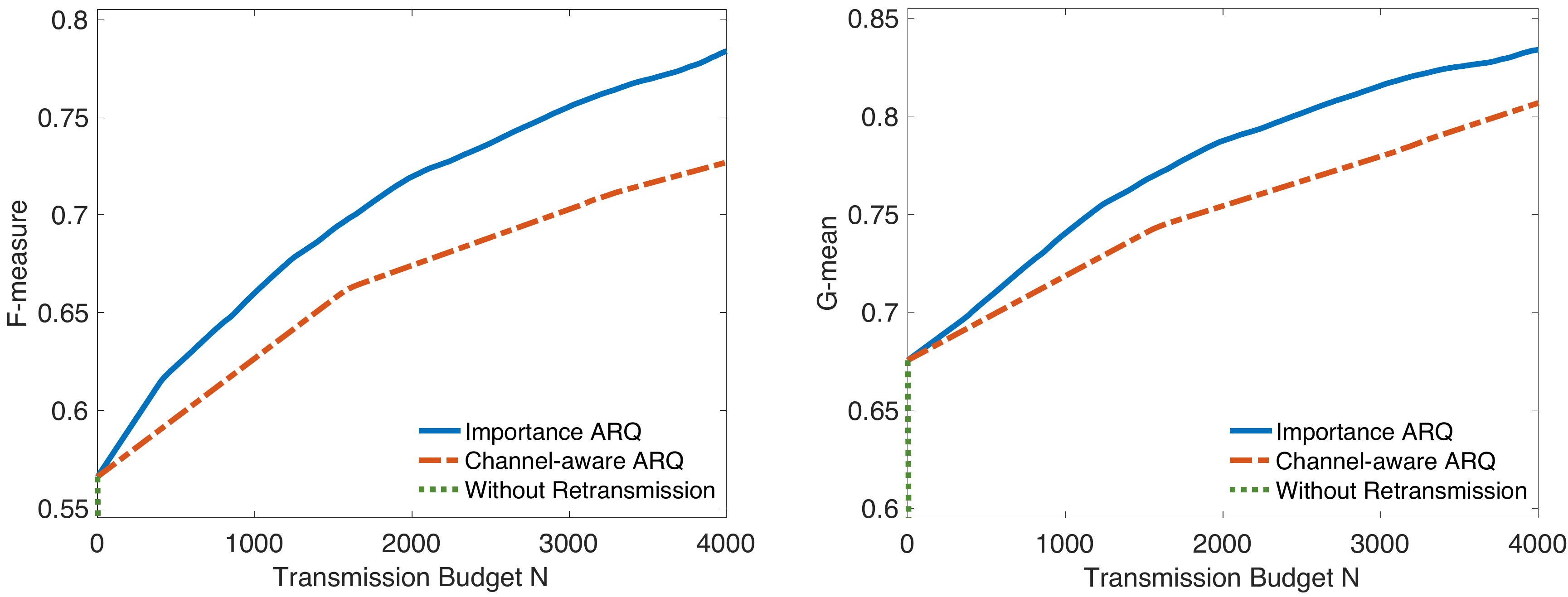}}\vspace{-13pt}
\caption{Learning performance for CNN classifier training using  imbalanced data.}\vspace{-10pt}
\label{Fig: Imbalanced CNN}
\vspace{-6mm}
\end{center}
\end{figure}

}
\vspace{-10pt}

\section{Concluding Remarks}\label{sec: concluding remarks}
In this paper, we have proposed a novel retransmission protocol, namely importance ARQ, for wireless data acquisition in edge learning systems. It intelligently  adapts retransmission to data-sample  importance so as to enhance the learning performance given a transmission latency constraint. Comprehensive experiments using  real datasets  substantiate the performance gain of the proposed design. 

At a higher level, the work contributes  the new principle of exploiting the non-uniform distribution of data importance to improve the efficiency of wireless data acquisition for edge learning. Importance aware retransmission is just one way  for materializing this principle. It can be applied to many other aspects of wireless data acquisition such as scheduling, power control, spectrum allocation, and energy efficient transmission. Thereby many promising research opportunities are presented. 

\vspace{-10pt}

\appendices 
\section{Proof of Lemma~\ref{lemma: noiseless data distribution}} \label{app: noiseless data distribution}
\vspace{-8pt}

The  received sample $\hat{{ \bf x}}{(T)}$  in \eqref{eq: est x} can be  rewritten as 
\vspace{-8pt}
\begin{align}
\hat{{ \bf x}}{(T)}&={ \bf x}+\Re \l(\widetilde{\bf z}(T)\r), \nn
\end{align}
\vspace{-8pt}
where $\widetilde{\bf z}(T) = \frac{1}{ \sqrt{P}}\l(\frac{\sum_{i = n+1 }^{n+T}\l(h^{(i)}\r)^*{{\bf z}^{(i)}}}{\sum_{m = n+1 }^{n+T}|h^{(m)}|^2}\r)$.
Consequently, the transmitted sample is 
\begin{equation}
{{ \bf x}}=\hat{{ \bf x}}{(T)}-\Re \l(\widetilde{\bf z}{(T)}\r), \vspace{-8pt}
\end{equation}
where $\widetilde{{\bf z}}{(T)}=\l[\widetilde{z}_1{(T)}, \cdots, \widetilde{z}_p{(T)}\r]^{\sf T}$ is the equivalent noise after combining with the entries being  
\begin{equation}\label{eq: expression of equivalent noise}
\widetilde{z}_j{(T)}=\frac{1}{\sqrt{P}}\times\frac{\sum_{i=n+1}^{n+T}\l(h^{(i)}\r)^*z_j^{(i)}}{\sum_{m = n+1 }^{n+T}|h^{(m)}|^2},\ j=1,2,\cdots, p. \nn
\end{equation}
Since $z_j^{(i)}$ follows i.i.d ${\cal CN}\l(0,\sigma^2\r)$, each entries in $\widetilde{{\bf z}}{(T)}$ are i.i.d and the distributions are:
\begin{equation*}
\widetilde{z}_j{(T)}\sim{\cal CN}\l(0,\frac{\sigma^2}{\sum_{i = n+1 }^{n+T}|h^{(i)}|^2P}\r), \ j=1,2,\cdots, p.
\end{equation*}
With effective SNR defined in \eqref{eq: def SNR}, taking the real part of $\widetilde{\bf z}$ yields to the following distribution:
\begin{equation*}
\Re \l(\widetilde{\bf z}{(T)}\r)\sim{\cal N}\l(0,\frac{\bf I}{\SNR(T)} \r),
\end{equation*}
which leads to the desired result in \eqref{eq:data distribution}.

%

\bibliographystyle{ieeetr}

\end{document}